\newcolumntype{.}{D{.}{.}{-1}}
\newcolumntype{d}[1]{D{.}{.}{#1}}
\newcommand{\reals}{\mathbb{R}}
\newcommand{\Var}{{\rm Var}}
\newcommand{\E}{{\rm E}}
\newcommand{\Prob}{{\rm P}}
\newtheorem{corollary}{Corollary}
\newtheorem{proposition}{Proposition}
\theoremstyle{definition}
\newtheorem{assumption}{Assumption}
\theoremstyle{remark}
\begin{document}
\pagestyle{plain}

\newcommand{\blind}{0}

\newcommand{\tit}{\large{Instrumental Variables Estimation in Cluster Randomized Trials with Noncompliance: A Study of A Biometric Smartcard Payment System in India}}

\if0\blind

{\title{\tit\thanks{We thank participants at the 2018 Atlantic Causal Inference Conference, Penn Causal Inference Seminar, and the University of Florida 2019 Winter Workshop for helpful comments.}}
\author{Hyunseung Kang\thanks{Assistant Professor, University of Wisconsin, Madison, Email: hyunseung@stat.wisc.edu}
\and Luke Keele\thanks{Associate Professor, University of Pennsylvania, Email:
      luke.keele@pennmedicine.upenn.edu}}

\date{\today}

\maketitle
}\fi

\if1\blind
\title{\bf \tit}
\maketitle
\fi

\maketitle

\thispagestyle{empty}

\begin{abstract}
Many policy evaluations occur in settings where treatment is randomized at the cluster level, and there is treatment noncompliance within each cluster. For example, villages might be assigned to treatment and control, but residents in each village may choose to comply or not with their assigned treatment status. When noncompliance is present, the instrumental variables framework can be used to identify and estimate causal effects. While a large literature exists on instrumental variables estimation methods, relatively little work has been focused on settings with clustered treatments. Here, we review extant methods for instrumental variable estimation in clustered designs and derive both the finite and asymptotic properties of these estimators. We prove that the properties of current estimators depend on unrealistic assumptions. We then develop a new IV estimation method for cluster randomized trials and study its formal properties. We prove that our IV estimator allows for possible treatment effect heterogeneity that is correlated with cluster size and is robust to low compliance rates within clusters. We evaluate these methods using simulations and apply them to data from a randomized intervention in India.
\end{abstract}

\doublespacing
\clearpage

\section{Introduction}
In many policy settings, randomized trials are used to evaluate policy. Randomized trials allow analysts to rule out that causal effect estimates are confounded with pretreatment differences or selection biases amongst subjects.  In education and public health applications, investigators often use clustered randomized trials (CRTs), where treatments are applied to groups of individuals rather than individuals.  While clustered treatment assignment reduces power, it allows for arbitrary patterns of treatment spillover for individuals within the same cluster \citep{imbens2008recent}. This design feature is critical in settings where interactions between individuals within clusters are difficult to prevent.

One prototypical example of a CRT in a policy setting occurred in the state of Andhra Pradesh, India. The policy goal was to evaluate the effectiveness of using a biometric payment system to deliver social welfare payments to recipients in India \citep{muralidharan2016building}. The smartcard payment system used a network of locally hired, bank-employed staff to biometrically authenticate beneficiaries and make cash payments in villages. Biometric systems are designed to reduce the time it takes for payments to reach payees and reduce the chance that payments are siphoned off to someone other than the payee. The intervention was randomly assigned at the village level. Out of 157 villages, 112 were assigned to the intervention. Assignment at the village level prevented treatment spillovers that would have been difficult to prevent if assigned at the individual level.

For villages assigned to the control condition, smartcard payments were not available. For those in treated villages, recipients first had to enroll in the smartcard program by submitting biometric data (typically all ten fingerprints) and digital photographs. Beneficiaries were then issued a physical smartcard that included their photograph and an embedded electronic chip storing biographic and biometric data. The card was also linked to newly created bank accounts. Government officials conducted enrollment campaigns to collect the biometric data and issue the cards. The government contracted with banks to manage payments, and these banks in turn contracted with customer service providers (CSPs) to manage the accounts and travel to villages to deliver payments. The system was used to deliver payments from two large welfare programs: the National Rural Employment Guarantee Scheme (NREGS), and Social Security Pensions (SSP).  The first is a work-fare program that guarantees every rural household 100 days of paid employment each year \citep{dutta2010small}. SSP complements the first program by providing income support to those who are unable to work. Beneficiaries used the smartcards to collect payments from CSPs by inserting them into a point-of-service device. The device reads the card and retrieves account details. Payees were prompted for one of ten fingers, chosen at random, to be scanned. The device compares this scan with the records on the card, and authorizes a transaction if they match. Once authorized the amount of cash requested is disbursed, and the device prints out a receipt.
 
\citet{muralidharan2016building} conducted the original evaluation of the smartcard payment system but did not account for noncompliance in the analysis. In the smartcard evaluation, as is true in many CRTs with human subjects, compliance with the treatment assignment varied. Beneficiaries were designated as compliant if they used the smartcard one or more times to collect payments. In some villages, 90\% or more of the beneficiaries complied with the treatment, while in many villages less 10\% of the recipients complied with their assigned treatment assignment. 

While noncompliance is common in these designs, the statistical literature contains relatively little guidance on the analysis of CRTs with noncompliance. For example, two widely used textbooks on CRTs do not mention noncompliance at all \citep{hayes2009cluster, donner2000design}. Noncompliance in CRTs can be analyzed within the instrumental variables (IV) framework. IV methods allow investigators to estimate the average causal treatment effect among the subpopulation that complied with their assigned treatment \citep{Angrist:1996}. \citet{jo2008cluster} and \citet{schochet2013estimators} both extended the IV framework to CRTs.

In this paper, we study and develop methods of estimation and inference for CRTs within noncompliance. We review the two most commonly used methods of estimation and inference: one based on cluster level averages, and a second using individual unit level data. We demonstrate that both methods impose implicit assumptions on the heterogeneity of effects across clusters and the size of each cluster. We then present an alternative method of estimation that relaxes these assumptions and provides accurate finite sample inferences. Using a simulation study, we compare these different methods of estimation and inference. We conclude with an analysis of the smartcard trial which motivated the study.

\section{Review of Clustered Randomized Trials and Noncompliance}
\subsection{Notation}

Suppose there are $J$ total clusters, indexed by  $j=1,\ldots,J$. For each cluster $j$, there are $n_{j}$ units, indexed by $i=1, \ldots, n_{j}$ and we have $n = \sum_{j=1}^{J} n_j$ total units. A fixed $m$ number of clusters are assigned treatment and the other $J - m$ clusters are assigned control. The treatment assignment is uniform within each cluster; if cluster $j$ is assigned treatment, all $n_{j}$ individuals in cluster $j$ are assigned treatment. Let $Z_{j} = 1$ indicate that cluster $j$ received treatment and $Z_{j} = 0$ indicates that cluster $j$ received control. Let $\mathbf{Z} = (Z_1,\ldots,Z_J)$ be a vector of $Z_{j}$s. Also, let $\mathbf{z} = (z_{1},\ldots,z_{J})$ be one possible value of treatment assignment from a set of treatment assignments $\mathcal{Z} = \{\mathbf{z} \in \{0,1\}^{J}\mid \sum_{j=1}^{J} Z_{j} = m\}$. Let $D_{ji} = 1$ indicate that unit $i$ in cluster $j$ actually took the treatment and $D_{ji} = 0$ indicate that the unit actually took the control. Let $Y_{ji}$ denote the observed outcome for individual $i$ in cluster $j$. We can also use $Z_{ji}$ to denote the treatment assignment of individual $i$ in cluster $j$, but because treatment assignment is uniform within each cluster, we use this notation infrequently.

We use the potential outcomes approach to define causal effects \citep{Neyman:1923a,rubin_estimating_1974}. Let $D_{ji}^{(1)}$ and $D_{ji}^{(0)}$ indicate the potential treatment receipt of individual $i$ in cluster $j$ if cluster $j$ received treatment $Z_{j} = 1$ or control $Z_{j} = 0$, respectively.  Let $Y_{ji}^{(1,d_{ji})}$ and $Y_{ji}^{(0,d_{ji})}$ indicate the potential outcomes of individual $i$ in cluster $j$ if cluster $j$ received treatment $Z_{j} = 1$ or control $Z_{j} =0$, respectively, and each individual's treatment receipt was set to $d_{ji} \in \{0,1\}$. Let $Y_{ji}^{\left(1,D_{ji}^{(1)}\right)}$ and $Y_{ji}^{\left(0,D_{ji}^{(0)}\right)}$ indicate the potential outcomes of individual $i$ in cluster $j$ if cluster $j$ received treatment $Z_{j} = 1$ or control $Z_{j} = 0$, respectively and each individual took on their ``natural'' potential treatment receipt. Because only one of the two potential outcomes can be observed for a treatment assignment $Z_j$, the relationship between the potential outcomes and observed values is 
\[
Y_{ji} = Z_{j} Y_{ji}^{\left(1,D_{ji}^{(1)}\right)} + (1 - Z_j) Y_{ji}^{\left(0,D_{ji}^{(0)}\right)}, \quad{} D_{ji} = Z_{j} D_{ji}^{(1)} + (1 -Z_j) D_{ji}^{(0)}
\]
Let $\mathcal{F} = \left\{Y_{ji}^{\left(z,D_{ji}^{(z)}\right)},D_{ji}^{(z)}, z=0,1,j=1,\ldots,J, i =1,\ldots,n_{j} \right\}$ be the set containing all values of the potential outcomes. We note that the potential outcomes notation implicitly assumes that there is no interference and the stable unit treatment value assumption (SUTVA) holds \citep{Rubin:1986}. This assumption is commonly assumed in CRTs; see \citet{frangakis_clustered_2002}, \citet{small_randomization_2008}, \citet{jo_intention_2008}, \citet{imai_essential_2009}, \citet{schochet_estimation_2011} and \citet{middleton_unbiased_2015} for examples.

We also define a set of summary statistics for each cluster $j$. For each cluster $j$, let $Y_{j} = \sum_{i=1}^{n_j} Y_{ji}$ and $D_{j} = \sum_{i=1}^{n_j} {D}_{ij}$ indicate the sums of individual outcomes and treatment receipts, respectively. Similarly, for each cluster $j$, let $\overline{Y}_j = Y_j / n_j$ and $\overline{D}_{j} = D_j / n_j$ indicate the average of individual outcomes and treatment receipts, respectively. For each cluster $j$ and treatment indicator $z \in \{0,1\}$, let $Y_{j}^{\left(z,D_{j}^{(z)}\right)} = \sum_{i=1}^{n_j} Y_{ji}^{\left(z,D_{ji}^{(z)} \right)}$ and $D_{j}^{(z)} = \sum_{i=1}^{n_j} D_{ji}^{(z)}$ indicate the sums of potential outcomes and treatment receipts, respectively, when cluster $j$ is assigned treatment $z$. Similarly, let $\overline{Y}_{j}^{\left(z,D_{j}^{(z)}\right)} = Y_{j}^{\left(z,D_{j}^{(z)}\right)}  / n_j$ and $\overline{D}_{j}^{(z)} = D_{j}^{(z)} / n_j$ indicate the averages of potential outcomes and potential treatment receipts, respectively, when cluster $j$ is assigned treatment $z$.

\subsection{Assumptions} 
\label{sec:A}
Next, we review the assumptions underlying CRTs with noncompliance. We begin by discussing assumptions specific to CRTs; see \citet{hayes2009cluster} and \citet{donner2000design} for details. We assume that each cluster is randomly assigned to treatment $Z_{j} = 1$  or control $Z_{j}=0$ and that the probability of receiving either one is non-zero.
\begin{assumption}[Cluster Randomization] \label{a1}Given $J$ clusters, $m$ clusters ($0< m < J$) are randomly assigned treatment and the other $J - m$ clusters are assigned control: 
\[
\Prob(\mathbf{Z} = \mathbf{z} \mid \mathcal{F}, \mathcal{Z}) = \Prob(\mathbf{Z} = \mathbf{z} \mid \mathcal{Z}) \text{ and } \Prob(\mathbf{Z} = \mathbf{z} \mid \mathcal{Z}) = \frac{1}{{J \choose m}}
\]
\end{assumption}
Assumption \ref{a1} holds in a typical CRT, since clusters are randomly assigned to either treatment or control. Under Assumption \ref{a1}, we can identify and unbiasedly estimate the following average causal effects:
\begin{align*}
\mu_Y &= \frac{1}{n} \sum_{j=1}^{J} \sum_{i=1}^{n_j} Y_{ji}^{\left(1,D_{ji}^{(1)}\right)} - Y_{ji}^{\left(0,D_{ji}^{(0)}\right)}, \quad{} \mu_D = \frac{1}{n} \sum_{j=1}^{J} \sum_{i=1}^{n_j} D_{ji}^{(1)} -  D_{ji}^{(0)}.
\end{align*}
The first effect $\mu_Y$ is the average causal effect of treatment assignment on the outcome and the second effect $\mu_D$ is the average causal effect of the treatment assignment on treatment receipt.
$\mu_Y$ is often referred to as the intent-to-treat (ITT) effect and $\mu_D$ is often referred to as the compliance rate. Unbiased estimators for $\mu_Y$ and $\mu_D$, denoted as $\hat{\mu}_Y$ and $\hat{\mu}_D$, respectively, are difference-in-means estimators of sums between treated and control clusters:
\begin{align*}
\hat{\mu}_Y &= \frac{1}{n} \left(\frac{J}{m} \sum_{j=1}^{J} Z_j Y_j - \frac{J}{J -m} \sum_{j=1}^{J} (1 - Z_j) Y_j \right), \quad{} \hat{\mu}_D = \frac{1}{n} \left(\frac{J}{m} \sum_{j=1}^{J} Z_j D_j - \frac{J}{J -m} \sum_{j=1}^{J} (1 - Z_j) D_j \right)
\end{align*}
where $\E[\hat{\mu}_Y  \mid \mathcal{F}, \mathcal{Z}] = \mu_Y$ and $\E[\hat{\mu}_D \mid \mathcal{F}, \mathcal{Z}] = \mu_D$ under Assumption \ref{a1}.

Next, we review assumptions that are specific to noncompliance; see \citet{Imbens:1994}, \citet{Angrist:1996}, \citet{hernan_instruments_2006}, and \citet{baiocchi_instrumental_2014} for a detailed discussion of these assumptions. We outline these assumptions in the context of the smartcard study in our motivating example. We start by assuming that the compliance rate is non-zero and no individual systematically defies his/her treatment assignment, also referred to as the monotonicity assumption \citep{Imbens:1994}.
\begin{assumption}[Non-Zero Causal Effect] \label{a2} The treatment assignment $z_{j}$, on average, causes a changes in treatment receipt $D_{ij}$, i.e. $\mu_D \neq 0$.
\end{assumption}
\begin{assumption}[Monotonicity]  \label{a3} There is no individual who systematically defies the treatment assignment, i.e. $D_{ji}^{(0)} \leq D_{ji}^{(1)}$ for all $ij$.
\end{assumption}
In the context of the smartcard experiment, Assumption \ref{a2} states that on average, individuals signed up for the biometric system when their village offered it. If Assumption \ref{a1} holds, Assumption \ref{a2} can be verified from data by using the estimator $\hat{\mu}_{D}$.  Assumption \ref{a3} can be interpreted by categorizing the study participant into four types, always-takers, never-takers, compliers, and defiers, based on his/her potential treatment receipts $D_{ji}^{(0)}$ and $D_{ji}^{(1)}$ \citep{Angrist:1996}. Always-takers are individuals where $D_{ji}^{(1)} = D_{ji}^{(0)} = 1$; these individuals would use the smartcard payment system irrespective of village treatment status. Never-takers are individuals where $D_{ji}^{(1)} = D_{ji}^{(0)} = 0$; these individuals would never use smartcard payments irrespective of the village treatment assignment. Compliers are individuals where $D_{ji}^{(1)} = 1$ and $D_{ji}^{(0)} = 0$; they would use smartcard payments only when their village is assigned to treatment. Defiers are individuals where $D_{ji}^{(1)} = 0$ and $D_{ji}^{(0)} = 1$; these individual would act contrary to assigned village treatment status. Assumption \ref{a3} states that there are no defiers in the study population. Assumption \ref{a3} can hold by design if units receiving the control cannot obtain the treatment, i.e. $D_{ji}^{(0)} = 0$; this is known as one-sided noncompliance in the literature. For example, in the smartcard intervention, noncompliance was one-sided, since the key technology for payment verification was unavailable in the control villages. 


Finally, we assume that conditional on treatment received, the initial treatment assignment has no impact on the outcome. This is commonly known as the exclusion restriction.
\begin{assumption}[Exclusion Restriction] \label{a4} Conditional on treatment receipt $d_{ji}$, the treatment assignment has no effect on the outcome, i.e. $Y_{ji}^{(1,d_{ji})} = Y_{ji}^{(0,d_{ji})} = Y_{ji}^{(d_{ji})}$ for all $d_{ji}$.
\end{assumption}
Typically, Assumption \ref{a4} is assessed based on substantive knowledge about the treatment and the outcome. For example, in the smartcard study, for Assumption \ref{a4} to hold, it must be the case that being assigned to smartcard payments has no direct effect on the outcome except through exposure to the intervention. If the outcome is time to payment (see Section \ref{sec:data} for details), being assigned to the smartcard payment system can only reduce payment delay through actual use of smartcard payments. This seems likely to hold, since the reduced payment times are directly facilitated by the technology that underlies the intervention.

\subsection{Complier Average Treatment Effect}
In CRTs with noncompliance, one causal estimand of interest is the complier average causal effect (CACE), which is the average effect of actually taking the treatment (i.e. when $D_{ji}$ is set to $1$) versus not taking the treatment (i.e. when $D_{ji}$ is set to $0$) among compliers. We denote the CACE as $\tau$:
\begin{equation} \label{eq:CATE}
 \tau = \frac{\sum_{j=1}^{J} \sum_{i=1}^{n_j} (Y_{ji}^{(1)} - Y_{ji}^{(0)}) I(D_{ji}^{(1)} = 1, D_{ji}^{(0)} = 0)}{\sum_{j=1}^{J} \sum_{i=1}^{n_j}  I(D_{ji}^{(1)} = 1, D_{ji}^{(0)} = 0)}.
\end{equation} 
The parameter $\tau$ is also referred to as a local causal effect because it only describes the causal effect among a subgroup of individuals in the population, specifically the compliers. In the smartcard study, $\tau$ is the effect of using smartcard payments among those recipient who were induced to use smartcards when exposed to the intervention. Prior literature has shown that under Assumptions \ref{a1}-\ref{a4}, $\tau$ is identified by taking the ratio of the ITT effect $\mu_Y$ with the compliance rate $\mu_D$, i.e. $\mu_Y / \mu_D$ \citep{Imbens:1994, Angrist:1996}.

In the context of a CRT, we can decompose $\tau$ to understand how CACE may vary across clusters. First, let $n_{{\rm CO},j} = \sum_{i=1}^{n_j}  I(D_{ji}^{(1)} = 1, D_{ji}^{(0)} = 0)$ be the number of compliers in cluster $j$, $n_{{\rm CO}} = \sum_{j=1}^{J}n_{{\rm CO},j}$ be the total number of compliers across all clusters, and $\tau_j$ be the CACE for cluster $j$, i.e.
\begin{equation*}
\tau_j = \frac{\sum_{i=1}^{n_j} (Y_{ji}^{(1)} - Y_{ji}^{(0)}) I(D_{ji}^{(1)} = 1, D_{ji}^{(0)} = 0)}{n_{{\rm CO},j}}
\end{equation*}
Using these terms, we can rewrite $\tau$ as a weighted average of cluster-specific CACE
\begin{equation}
\label{eq:tau_j}	
\tau = \sum_{j=1}^{J} w_j \tau_j, \quad{} w_j = \frac{n_{{\rm CO},j}}{ \sum_{j=1}^{J}  n_{{\rm CO},j}}.
\end{equation}
where the weights, denoted as $(w_1,\ldots,w_J)$ are functions of the number of compliers in each cluster. If some clusters have no compliers, i.e. clusters $j$ where $n_{{\rm CO},j} = 0$, we would decompose $\tau$ as sum over clusters $j$ with $n_{{\rm CO},j} > 0$. Finally, if the cluster specific CACEs $\tau_j$s are constant across clusters so that $\tau_1 = \tau_2 = \cdots = \tau_{J-1} = \tau_J$, the population CACE $\tau$ is equal to the cluster CACE $\tau_j$. 

Finally, we conclude the discussion of CACE by formalizing hypothesis testing for $\tau$. For any $\tau_0 \in \reals$, let $H_0$ denote the null hypothesis for $\tau$:
\begin{equation} \label{eq:hyp}
H_0:  \tau = \tau_0, \quad{} \tau_0 \in \reals
\end{equation}
The null hypothesis $H_0$ in equation \eqref{eq:hyp} is a composite null hypothesis because there are several values of $\mathcal{F}$ for which the null holds. In other words, $H_0$ is not a sharp null hypothesis \citep{Fisher:1935}; under a sharp null, we would be able to specify exactly the unobserved potential outcomes. In fact, the sharp null of no ITT effect, $Y_{i}^{\left(1,D_{i}^{(1)}\right)} = Y_{i}^{\left(0,D_{i}^{(0)}\right)}$ for all $i$ implies $H_0: \tau = 0$, but the converse is not necessarily true; there are other values of $\mathcal{F}$ that satisfies the null hypothesis $H_0: \tau = 0$. Next, we describe extant procedures for estimating the CACE in CRTs with noncompliance.
 
\section{Review of Extant Methods for Analysis of CRTs with Noncompliance}

The first method we review is outlined in \citet{Hansen:2008b} and \citet{schochet2013estimators}. This method aggregates individual observations at the cluster level and conducts a statistical analysis with the aggregate cluster-level quantities. 
The second method applies two-stage least squares (TSLS) to the unit level data and uses robust standard errors to account for within cluster correlations. A third approach, which we do not explore, is to implement IV methods using random effects models \citep{jo2008cluster,small2006random} because they typically rely on specifying additional distributional assumptions.

\subsection{The Cluster-Level Method}
\label{sec:clus}
Cluster-level methods analyzes CRTs with noncompliance at the cluster level. Specifically, the investigator takes averages or sums of individual outcomes and compliances within each cluster, and then treat the study as though it only $J$ effective subjects with $J$ cluster-level summaries \citep{Hansen:2008b,schochet2013estimators}. Formally, given cluster-level average outcomes $\overline{Y}_{j}$ and treatment receipts $\overline{D}_{j}$, we define the overall average outcomes and compliances by treatment status
\begin{align*}
\overline{Y}_T &= \frac{1}{m}\sum_{j=1}^J Z_j\overline{Y}_j, \quad{} \overline{Y}_C = \frac{1}{J-m}\sum_{j=1}^{J} (1 - Z_j) \overline{Y}_j \\
\overline{D}_T &= \frac{1}{m}\sum_{j=1}^J Z_j \overline{D}_j, \quad{} \overline{D}_C = \frac{1}{J-m}\sum_{j=1}^{J} (1 - Z_j)\overline{D}_j
\end{align*}
Here, $\overline{Y}_T$ and $\overline{Y}_C$ represent the average outcomes among treated and control clusters, respectively. Similarly, $\overline{D}_T$ and $\overline{D}_C$ represent the average treatment receipts among treated and control clusters. The four quantities, $\overline{Y}_T, \overline{Y}_C,\overline{D}_T$ and $\overline{D}_C$, are used to define the original Wald-like estimator \citep{wald1940} for $\tau$ in cluster settings, which we denote as $\widehat{\tau}_{{\rm CL}}$:
\begin{equation} \label{eq:cl}
\widehat{\tau}_{{\rm CL}} = \frac{\overline{Y}_T - \overline{Y}_C}{\overline{D}_T - \overline{D}_C}. 
\end{equation}
For testing the null hypothesis of $\tau$ in equation \eqref{eq:hyp}, \citet{schochet2013estimators} uses the Delta Method to derive asymptotic properties the estimator $\widehat{\tau}_{{\rm CL}}$. Specifically, consider the estimated variances  for $\overline{Y}_T - \overline{Y}_C$ and $\overline{D}_T - \overline{D}_C$ 
\[
\widehat{\Var}(\overline{Y}_T - \overline{Y}_C) = \frac{J S_Y^2}{m(J - m)},\quad{} \widehat{\Var}(\overline{D}_T - \overline{D}_C) = \frac{JS_D^2}{m(J-m)}
\]
\noindent where
\begin{align*}
S_Y^2 &= \frac{\sum_{j=1}^J Z_j(\overline{Y}_j - \overline{Y}_T)^2 + \sum_{j=1}^{J}(1 - Z_j) (\overline{Y}_j - \overline{Y}_C)^2}{J-2} \\
S_D^2 &= \frac{\sum_{j=1}^J Z_j (\overline{D}_j - \overline{D}_T)^2 + \sum_{j=1}^{J} (1 - Z_j) (\overline{D}_j - \overline{D}_C)^2}{J-2}
\end{align*}
Additionally, the estimated covariance between $\overline{Y}_T - \overline{Y}_C$ and $\overline{D}_T - \overline{D}_C$ is
\[
\widehat{\rm Cov}(\overline{Y}_T - \overline{Y}_C, \overline{D}_T - \overline{D}_C) =  \frac{\sum_{j=1}^{J} Z_j (\overline{Y}_j - \overline{Y}_T)(\overline{D}_j - \overline{D}_T)}{m^2} + \frac{\sum_{j=0}^{J} (1 - Z_j)(\overline{Y}_j - \overline{Y}_C)(\bar{D}_j - \overline{D}_C)}{(J -m)^2} 
\]
Based on the Delta method, \citet{schochet2013estimators} proposes the following estimator for the variance of $\hat{\tau}_{{\rm CL}}$ using the estimated variances and covariance
\[
\widehat{\Var}(\widehat{\tau}_{\rm CL})= \frac{ \Var(\overline{Y}_T - \overline{Y}_C)}{(\overline{D}_T - \overline{D}_C)^2} + \widehat{\tau}_{{\rm cl}}^2\frac{ \widehat{\Var}(\overline{D}_T - \overline{D}_C)}{(\overline{D}_T - \overline{D}_C)^2} - 2 \widehat{\tau}_{\rm cl} \frac{ \widehat{\rm Cov}(\overline{Y}_T - \overline{Y}_C, \overline{D}_T - \overline{D}_C)}{(\overline{D}_T - \overline{D}_C)^2}
\]
Also, for any $\alpha \in (0,1)$,  \citet{schochet2013estimators} suggested the following $1-\alpha$ confidence interval for $\tau$ 
\begin{equation} \label{eq:ci_cl}
\widehat{\tau}_{\rm CL} \pm z_{1-\alpha/2} \sqrt{\widehat{\Var}(\widehat{\tau}_{\rm CL})}
\end{equation}
where $z_{1-\alpha/2}$ is the $1-\alpha/2$ quantile of the standard Normal distribution. If $\tau_0$ in $H_0$ is included in interval \eqref{eq:ci_cl}, then $H_0$ is retained at the $\alpha$ level. Otherwise, $H_0$ is rejected in favor of the two-sided alternative $H_1: \tau \neq \tau_0$. 


\subsection{The Unit-Level Method} 
\label{sec:tsls}
Unit-level methods analyze CRTs with noncompliance at the unit level by directly using individual measurements rather than cluster level summary statistics. For inference, unit level methods use robust variance estimators that take into account intra-cluster correlations. This approach is a generalization of clustered standard errors developed by \citet{liang1986longitudinal}.

Formally, let $\mathbf{Y} = (Y_{11},Y_{12},\ldots,Y_{1n_1},Y_{21},\ldots,Y_{Jn_J})$ be the vectorized outcome variable, $\mathbf{D} = (D_{11},D_{12},\ldots,D_{1n_1},D_{21},\ldots,D_{Jn_J})$ be the vectorized compliance variable, and $\mathbf{Z} = (Z_{11},Z_{12},\ldots,Z_{1n_1},Z_{21},\ldots,Z_{Jn_J})$ be the vectorized treatment assignment. Point estimation of the CACE relies on two-stage least squares (TSLS) applied to these three vectors. The TSLS estimator $\widehat{\tau}_{\rm TSLS}$ has a closed-form expression and can be expressed as
\begin{align} \label{eq:TSLS}
\widehat{\tau}_{\rm TSLS} &= \frac{\frac{\sum_{j=1}^{J} Z_j Y_j}{\sum_{j=1}^{J} Z_j n_j} - \frac{\sum_{j=1}^{J} (1 - Z_j) Y_j}{\sum_{j=1}^J (1 - Z_j) n_j }}{\frac{\sum_{j=1}^{J} Z_j D_j}{\sum_{j=1}^{J} Z_j n_j} - \frac{\sum_{j=1}^{J} (1 - Z_j) D_j}{\sum_{j=1}^J (1 - Z_j) n_j }} \\
&=  \frac{ \sum_{j=1}^{J} (1-Z_j) n_j \sum_{j=1}^{J} Z_j Y_j - \sum_{j=1}^{J} Z_j n_j \sum_{j=1}^{J} (1 - Z_j)Y_j }{ \sum_{j=1}^{J} (1-Z_j) n_j \sum_{j=1}^{J} Z_j D_j - \sum_{j=1}^{J} Z_j n_j \sum_{j=1}^{J} (1 - Z_j)D_j }
\end{align}
For testing the null hypothesis of $\tau$ in equation \eqref{eq:hyp}, let $u_{ji} = Y_{ji} - D_{ji} \widehat{\tau}_{\rm TSLS}$ be the estimated residual, $u_{j} = \sum_{i=1}^{n_j} u_{ji}$ be the sum of residuals for each cluster, $\widehat{D}_{j} = \sum_{i=1}^{n_j} \widehat{D}_{ji}$ be the sum of predicted treatment receipts for each cluster, $\widehat{D}_{j}^2 = \sum_{i=1}^{n_j} \widehat{D}_{ji}^2$ be the sum of predicted treatment receipt square for each cluster, and $\widehat{D_j u_j} = \sum_{i=1}^{n_j } \widehat{D}_{ji} u_{ji}$ be the sum of the product of the residual and the predicted treatment receipt for each cluster. Then, following \citet{liang1986longitudinal}, \citet{wooldridge2010econometric} and \citet{cameron2015practitioner} and some algebra (see supplementary materials for details), the estimated cluster-robust standard error is
\[
\widehat{\Var}(\widehat{\tau}_{\rm TSLS}) =\frac{ \left(\sum_{j=1}^{J} \widehat{D}_j \right)^2 \left(\sum_{j=1}^{J} u_j^2 \right) + n^2 
\left(\sum_{j=1}^{J} \widehat{D_j u_j}^2 \right) - 2n \left(\sum_{j=1}^{J} \widehat{D}_{j} \right) \left( \sum_{j=1}^{J} u_j \widehat{D_j u_j} \right)  }{\left(n \sum_{j=1}^{J} \widehat{D}_{j}^2 - \left( \sum_{j=1}^{J} \widehat{D}_j \right)^2 \right)^2}
\]
Standard econometric arguments (see Chapter 5.2 of \citet{wooldridge2010econometric}) can be used to construct a $1-\alpha$ confidence interval for $\tau$
\begin{equation} \label{eq:ci_tsls}
\widehat{\tau}_{\rm TSLS} \pm z_{1-\alpha/2} \sqrt{\widehat{\Var}(\widehat{\tau}_{\rm TSLS}) }
\end{equation}
Prior simulation studies suggest that $40 \leq J $ is necessary for this confidence interval to be reasonable \citep{Angrist:2009}. See recent work by \citet{pustejovskycluster2018} for improved small sample performance.


\section{Properties of Extant Methods}
\subsection{Fixed Population} \label{sec:fixed_pop}
We study the properties of the methods in Sections \ref{sec:clus} and \ref{sec:tsls}, specifically $\widehat{\tau}_{\rm CL}$ and $\widehat{\tau}_{\rm TSLS}$, when the sample size $n$ and population $\mathcal{F}$ are fixed. Formally, we follow \citet{Angrist:1996}, \citet{schochet2013estimators}, \citet{imbens_causal_2015} and study the first-order (Taylor) properties of these two estimators when the population remains fixed. Specifically, consider the first-order Taylor approximation of the function $f(x,y) = x/y$ centered at values $x_0$ and $y_0$
\begin{align*}
f(x,y) &= f(x_0,y_0) + \frac{\delta}{\delta x} f(x_0,y_0) (x - x_0) + \frac{\delta}{\delta y} f(x_0,y_0)(y - y_0) + R \\
         &\approx f(x_0,y_0) + \frac{\delta}{\delta x} f(x_0,y_0) (x - x_0) + \frac{\delta}{\delta y} f(x_0,y_0)(y - y_0)\\
         &= \frac{x_0}{y_0} + \frac{1}{y_0}(x - x_0) - \frac{x_0}{y_0^2} (y - y_0)
\end{align*}
where $R$ is the remainder term from the Taylor approximation. If $x$ and $y$ are random variables and we set $x_0$ and $y_0$ to be the expectations of the numerator and denominator, respectively, we can use the approximation to characterize the central tendency of both estimators. We remark that the approximation sign does not indicate the approximation coming from the sample size going to infinity, but rather indicates the approximation coming from the truncation of the Taylor series. Also, while not as precise as higher-order approximations, we find that the first-order terms provide sufficient insights into the finite-sample properties of the two estimators. 

Proposition \ref{lem:cl} shows that the first-order properties of the two methods depend on either the assumption that (i) the number of units in all clusters are the same or (ii) cluster-level CACEs are identical across all clusters to estimate $\tau$.
\begin{proposition} \label{lem:cl} If Assumptions \ref{a1}-\ref{a4} hold, the first-order expectation of $\widehat{\tau}_{\rm CL}$ and $\widehat{\tau}_{\rm TSLS}$, denoted as $\tau_{\rm CL}$ and $\tau_{\rm TSLS}$, respectively, are weighted complier average treatment effects across each cluster-specific CACE $\tau_j$.
\begin{align*}
\tau_{\rm CL} &= \sum_{j=1}^{J} w_{{\rm CL},j} \cdot \tau_j, \quad{} w_{{\rm CL}, j}= \frac{\frac{n_{{\rm CO},j}}{n_j}}{\sum_{j=1}^{J} \frac{n_{{\rm CO},j}}{n_j}}\\
\tau_{\rm TSLS} & = \sum_{j=1}^{J}  w_{{\rm TSLS},j} \tau_j, \quad{} w_{{\rm TSLS},j} = \frac{n_{{\rm CO},j} (n - n_j)}{\sum_{j=1}^{J} n_{{\rm CO},j} (n-n_j)} 
\end{align*}
\end{proposition}
Proposition \ref{lem:cl} demonstrates how each method of estimation weighs cluster-level complier average treatment effects $\tau_j$. While $\tau_{\rm CL}$ averages each cluster-level CACE $\tau_j$ by the proportion of compliers, $n_{{\rm CO},j}/n_j$, it does not take into consideration the size of each cluster. Alternatively, $\tau_{\rm TSLS}$ averages each cluster-level CACE $\tau_j$ by the product of the number of compliers per cluster $n_{{\rm CO},j}$ and the number of individuals in other clusters $n - n_j$, via $n_{{\rm CO},j} (n - n_j )$. Both weights are different than the true CACE $\tau$, which weights by the number of compliers $n_{{\rm CO},j}$ via $n_{{\rm CO},j} / \sum_{j=1}^J n_{{\rm CO},j}$. 

Tables \ref{tab:CACE1} and \ref{tab:CACE2} provide a simple numerical demonstration of how the weight of each estimator compare to the true $\tau$ in a CRT with $J = 3$ clusters. The three clusters have $n_1 = 80, n_2= 10$, and $n_3 = 10$ units and the cluster level complier average treatment effects are $\tau_1 = 1$, $\tau_2 = 2$, and $\tau_3 = 1.5$.  In Table \ref{tab:CACE1}, the compliance rates across the three clusters are identical and set to 50\%. Under this scenario, the true CACE $\tau$ in equation \eqref{eq:CATE} weighs the cluster-level CACE proportional to the number of compliers in each cluster with weights $w_{1} = 0.8$ and $w_{2} = w_{3} = 0.1$. However, the cluster-based method $\tau_{\rm CL}$ gives disproportionately large weights to small clusters because the compliance rates are identical and all clusters' CACEs, despite their differences in size, are weighted equally with weights $w_{{\rm CL},1} = w_{{\rm CL},2} = w_{{\rm CL},3} = 1/3$. The unit-based method $\tau_{\rm TSLS}$ also gives disproportionately large weights to the small clusters. However, unlike $\tau_{\rm CL}$, it takes the size of each cluster into consideration when weighting the cluster-level CACEs and clusters with different sizes get slightly different weights: $w_{{\rm TSLS}, 1} \approx 800/1700$ and $w_{{\rm TSLS},2} = w_{{\rm TSLS},3} \approx 450/1700$. Also, the two clusters of identical size receive identical weights under the unit-level method, similar to the weights from the true CACE; in contrast, the cluster-level method ignores the size of the cluster and weights each cluster equally. Nevertheless, $\tau_{\rm CL}$ and $\tau_{\rm TSLS}$ under-estimate the true CACE in equation \eqref{eq:CATE}. 

\begin{table}[h!] 
\begin{center}
\begin{threeparttable}
\caption{A numerical example of a CRT with three clusters $J=3$ and identical compliance rates.} 
\label{tab:CACE1}
\begin{tabular}{p{1.2cm} | p{1.5cm} p{4cm} p{8.5cm}}

Cluster & Cluster Size: $n_j$ & Number of Compliers: $n_{{\rm CO},j}$ (\% compl. rate)& CACE per Cluster: $\tau_j$ \\ \toprule
$j= 1$ & $n_1 = 80$ & $n_{{\rm CO},1} = 40$ (50\%) & $\tau_{1} = 1$ \\
$j = 2$ & $n_2 = 10$ & $n_{{\rm CO},2} = 5$  (50\%) & $\tau_{2} = 2$ \\ 
$j = 3$ & $n_3 = 10$ &  $n_{{\rm CO},3} = 5$ (50\%) & $\tau_{3} = 1.5$ \\ \midrule
 $J = 3$ &$n = 100$ & $n_{\rm CO} = 50$ & $\tau = 1 (\frac{40}{50}) + 2(\frac{5}{50}) + 1.5 (\frac{5}{50}) = 1.15$ \\
\addlinespace[1ex] && & $\tau_{\rm CL} = 1 (\frac{0.5}{1.5}) + 2(\frac{0.5}{1.5}) + 1.5(\frac{0.5}{1.5}) =  1.5$\\ 
\addlinespace[1ex] && & $\tau_{\rm TSLS} = 1 (\frac{800}{1700}) + 2(\frac{450}{1700}) + 1.5 (\frac{450}{1700}) \approx 1.4$\\
\bottomrule
\end{tabular}
\begin{tablenotes}[para]
\small{Note: $\tau$ is the true CACE, $\tau_{\rm CL}$ is the identified value based on the cluster-level estimator $\widehat{\tau}_{\rm CL}$, and $\tau_{\rm TSLS}$ is the identified value based on TSLS $\widehat{\tau}_{\rm TSLS}$.}
\end{tablenotes}
\end{threeparttable}
\end{center}
\end{table}
 In Table \ref{tab:CACE2}, the setting is identical to Table \ref{tab:CACE1} except the compliance rates vary across the three clusters. Specifically, the two small clusters have higher compliance rates compared to the large cluster (80\% versus 10\%), but the number of compliers remain identical across clusters. Similar to Table \ref{tab:CACE1}, we see that both $\tau_{\rm CL}$ and $\tau_{\rm TSLS}$ tend to up-weigh the two small clusters and down-weigh the large cluster in comparison to the true $\tau$. We also see that in both tables, $\tau_{\rm TSLS}$ tends to be closer to the true CACE $\tau$ than $\tau_{\rm CL}$; however both methods tend to overestimate the true CACE.

\begin{table}[h!] 
\begin{center}
\begin{threeparttable}
\caption{A numerical example of a CRT with three clusters $J=3$ and different compliance rates.} 
\label{tab:CACE2}

\begin{tabular}{p{1.2cm} | p{1.5cm} p{4cm} p{8.5cm}}
Cluster & Cluster Size: $n_j$ & Number of Compliers: $n_{{\rm CO},j}$ (\% compl. rate)& CACE per Cluster: $\tau_j$ \\ \toprule
$j= 1$ & $n_1 = 80$ & $n_{{\rm CO},1} = 8$ (10\%) & $\tau_{1} = 1$ \\
$j = 2$ & $n_2 = 10$ & $n_{{\rm CO},2} = 8$  (80\%) & $\tau_{2} = 2$ \\ 
$j = 3$ & $n_3 = 10$ &  $n_{{\rm CO},3} = 8$ (80\%) & $\tau_{3} = 1.5$ \\ \midrule 
$J = 3$ &$n = 100$ & $n_{\rm CO} = 24$ & $\tau = 1 (\frac{8}{24}) + 2(\frac{8}{24}) + 1.5(\frac{8}{24}) = 1.5 $\\
\addlinespace[1ex] && & $\tau_{\rm CL} = 1 (\frac{0.1}{1.7}) + 2(\frac{0.8}{1.7}) + 1.5(\frac{0.8}{1.7}) \approx  1.71 $\\
\addlinespace[1ex] && & $\tau_{\rm TSLS} = 1 (\frac{320}{1600}) + 2(\frac{720}{1760}) + 1.5 (\frac{720}{1760}) \approx 1.68$\\
\bottomrule
\end{tabular}
\begin{tablenotes}[para]
\small{Note: $\tau$ is the true CACE, $\tau_{\rm CL}$ is the identified value based on the cluster-level estimator $\widehat{\tau}_{\rm CL}$, and $\tau_{\rm TSLS}$ is the identified value based on TSLS $\widehat{\tau}_{\rm TSLS}$.}
\end{tablenotes}
\end{threeparttable}
\end{center}
\end{table}

Corollary \ref{lem:equiv} shows that ${\tau}_{\rm CL}$ and ${\tau}_{\rm TSLS}$ are equal to the true CACE $\tau$ under two different conditions: (1) if the clusters are of equal size or (2) if the cluster-level CACEs are identical across clusters.
\begin{corollary} \label{lem:equiv} Suppose either condition below holds:
\begin{enumerate}
\item The number of units in each cluster $n_j$ is identical across all clusters.
\item The complier average treatment effect for each cluster $\tau_j$ is homogeneous/identical across all clusters.
\end{enumerate}
Then, $\tau_{\rm CL}$ and $\tau_{\rm TSLS}$ equal the complier average treatment effect $\tau$. 
\end{corollary}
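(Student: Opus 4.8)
The plan is to start from the weighted-average representations established in Proposition \ref{lem:cl}, namely $\tau_{\rm CL} = \sum_{j=1}^J w_{{\rm CL},j} \tau_j$ and $\tau_{\rm TSLS} = \sum_{j=1}^J w_{{\rm TSLS},j} \tau_j$, and to compare these against the true CACE $\tau = \sum_{j=1}^J w_j \tau_j$ with $w_j = n_{{\rm CO},j} / \sum_{k=1}^J n_{{\rm CO},k}$ from equation \eqref{eq:tau_j}. The key observation is that all three weight vectors are normalized to sum to one, so it suffices to show that each of the two stated conditions forces either a pointwise equality of weights or a collapse of the weighted averages.

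For the first condition, I would substitute $n_j \equiv n/J$ into each weight formula. In $w_{{\rm CL},j}$, the common factor $1/n_j$ cancels between numerator and denominator, leaving $n_{{\rm CO},j} / \sum_k n_{{\rm CO},k} = w_j$. In $w_{{\rm TSLS},j}$, equal cluster sizes make $n - n_j = n(1 - 1/J)$ a constant that factors out of both the numerator and the summation in the denominator, again yielding $w_j$. Hence, under equal cluster sizes, $w_{{\rm CL},j} = w_{{\rm TSLS},j} = w_j$ for every $j$, so both identified estimands reduce term-by-term to $\tau$.

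For the second condition, no manipulation of the weights is needed. If $\tau_j \equiv \tau_0$ is constant across clusters, then since each weight vector sums to one, every weighted average collapses: $\tau = \tau_0 \sum_j w_j = \tau_0$, and identically $\tau_{\rm CL} = \tau_0 \sum_j w_{{\rm CL},j} = \tau_0$ and $\tau_{\rm TSLS} = \tau_0 \sum_j w_{{\rm TSLS},j} = \tau_0$, so all three agree regardless of cluster sizes or compliance proportions.

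I expect no genuine obstacle here, since the result is a direct algebraic corollary of Proposition \ref{lem:cl}. The only point requiring mild care is verifying that the denominators in the weight formulas are nonzero: this is guaranteed by the standing assumption $n_{{\rm CO},j} > 0$ for every cluster together with Assumption \ref{a2}, and by $n_j < n$ (which holds whenever $J \geq 2$) so that $n - n_j > 0$ in the TSLS weights.
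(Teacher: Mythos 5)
Your proposal is correct and follows essentially the same route as the paper's own proof: under equal cluster sizes the common factors $1/n_j$ and $n-n_j$ cancel so that $w_{{\rm CL},j}$ and $w_{{\rm TSLS},j}$ both reduce to $n_{{\rm CO},j}/\sum_k n_{{\rm CO},k}$, and under homogeneous $\tau_j$ every normalized weighted average collapses to the common value. The added remark about nonzero denominators is a harmless refinement the paper leaves implicit.
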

The first condition in Corollary \ref{lem:equiv} allows for heterogeneity in complier average treatments across clusters, but stipulates that the cluster size must be the same for $\tau_{\rm CL}$ and $\tau_{\rm TSLS}$ to equal $\tau$. The second condition allows the cluster size to vary, but stipulates that every cluster must have the same cluster-level CACE for $\tau_{\rm CL}$ and $\tau_{\rm TSLS}$ to equal $\tau$. The latter condition is common in traditional econometrics by assuming a linear structural equation for the $Y$ and $D$ relationship where the coefficient associated with $D$ and $Y$ are assumed to be constant \citep{wooldridge2010econometric}. Neither is realistic in my CRT applications, including the smartcard intervention of interest in our study.

In summary, under a fixed population analysis, the two popular methods for analyzing CRT data with noncompliance may not always evaluate the target causal parameter of interest, CACE ($\tau$), even under a first-order analysis. The weights that make up both ${\tau}_{\rm CL}$ and ${\tau}_{\rm TSLS}$, $w_{{\rm CL,j}}$ and $w_{{\rm TSLS},j}$ respectively, which differ from the weights of $\tau$ in equation \eqref{eq:tau_j}. Indeed, only under the restrictive conditions stated in Corollary \ref{lem:equiv} can the two popular methods actually evaluate the CACE.

\subsection{Growing Population}
In the previous section, we studied the properties of the two methods assuming the population $\mathcal{F}$ is fixed. However, for hypothesis testing, often one assumes the sample size grows to infinity and consequently, $\mathcal{F}$ is growing. Here, we consider whether $\widehat{\tau}_{\rm CL}$ and $\widehat{\tau}_{\rm TSLS}$ estimate the true CACE $\tau$ under the large sample regime.

In a CRT, there are primarily two ways in which we might imagine the sample size growing. The first way is where the number of clusters $J$ is getting larger, but the units per cluster $n_{j}$ remain bounded. This regime is the basis for justifying the testing properties of many estimators in the CRT literature, including both $\widehat{\tau}_{\rm CL}$ and $\widehat{\tau}_{\rm TSLS}$ as mentioned before. This regime also serves as the basis for justifying the inferential properties of many causal estimators outside of the CRT literature; see \citet{li_general_2017} for a recent review. In practice,  this asymptotic regime is a good approximation of CRTs in political science concerning effectiveness of voter mobilization strategies where households serve as clusters and the number of households is large compared to the number of individuals that make up the household; see \citet{Gerber:2008} for an example and \citet{Green:2013} for an overview.

The second way in which the sample size can grow is  where the number of units per clusters $n_{j}$ is growing, but the number of clusters $J$ remains bounded. Under this regime, the inferential properties (e.g. confidence intervals) for many CRT estimators, including both $\widehat{\tau}_{\rm CL}$ and $\widehat{\tau}_{\rm TSLS}$ are difficult to characterize because the standard theoretical device based on the Lindeberg central limit theorem is no longer applicable. In practice, some CRT designs are closer to this regime. For example, \citet{hayes2014hptn} outlines a CRT design with 21 clusters with approximately 55,000 units in each cluster. 
 \citet{solomon2015high} describe a design with 12 cities as clusters and 1000 individuals per cluster. Generally speaking, a design of this type is more common in clustered observational studies; see \citet{acemoglu2000large} for one example. 

Let $p_{{\rm CO},j} = n_{\rm CO,j} / n_j$ be the proportion of compliers in cluster $j$. Proposition \ref{prop:asym_J} examines the properties of the two methods under the first asymptotic regime where the number of clusters go to infinity and the number of units per cluster remain bounded. Under this asymptotic regime, only $\tau_{\rm TSLS}$ is consistent for CACE whereas $\tau_{\rm CL}$ may not be.
\begin{proposition} \label{prop:asym_J} Suppose Assumptions \ref{a1}-\ref{a4} hold. As $n$ grows, consider a sequence of $\mathcal{F}$ where (i) $J\to \infty$, (ii) $n_j$s are bounded, (iii) the average cluster-level potential outcomes $\overline{Y}_{j}^{\left(z,D_{j}^{(z)}\right)}$ are bounded, and (iv) the compliance rate $\mu_D$ is bounded away from zero, i.e. $\mu_D > \delta$ for some constant $\delta > 0$.  Then, $\widehat{\tau}_{\rm CL}$ and the first-order expectation of $\widehat{\tau}_{\rm CL}$ is asymptotically biased away from $\tau$ by $b_n$, i.e
\begin{align*}
\lim_{n \to \infty} \widehat{\tau}_{\rm CL} - \tau -b_n  = 0, \quad{} \lim_{n \to \infty} \tau_{\rm CL} - \tau - b_n = 0
\end{align*}
where
\[
 b_n = \frac{1}{\left(\sum_{j=1}^J p_{ {\rm CO},j} \right) \left( \sum_{j=1}^J n_{{\rm CO},j} \right) }\sum_{j < l} p_{\rm CO,j} p_{\rm CO,l} (n_l - n_j) (\tau_j - \tau_l)
\]
But, $\widehat{\tau}_{\rm TSLS}$ and the first-order expectation of $\widehat{\tau}_{\rm TSLS}$ converges to $\tau$:
\[
\lim_{n \to \infty} \widehat{\tau}_{\rm TSLS} - \tau =  0, \quad{}  \lim_{n \to \infty} \tau_{\rm TSLS} - \tau  =  0
\]
\end{proposition}
Unlike the results in Propositions \ref{lem:cl} which were based on finite samples and where the two methods may not always estimate $\tau$, Proposition \ref{prop:asym_J} states that $\widehat{\tau}_{\rm TSLS}$ converges to the limiting $\tau$ under the aforementioned asymptotic regime. However, $\widehat{\tau}_{\rm CL}$ may not converge to the limiting $\tau$. For example, suppose in a household survey data, each household defines a cluster and the total number of people per household, $n_j$, is bounded by $B = 4$. Let half of the households have two individuals, $n_j = 2$, while the other half have four individuals $n_j = 4$; both types of households have 50\% compliance rate, $p_{{\rm CO},} = 0.5$. Also, for the two-member households, their CACEs are $\tau_{j} = 4$ while for the four-member household, their CACEs are $\tau_{j} = 2$. Then, the asymptotic bias of $\widehat{\tau}_{\rm CL}$ is
{\small
\begin{align*}
\frac{1}{\left(\sum_{j=1}^J p_{ {\rm CO},j} \right) \left( \sum_{j=1}^J n_{{\rm CO},j} \right) }\sum_{j < l} p_{\rm CO,j} p_{\rm CO,l} (n_l - n_j) (\tau_j - \tau_l) =& \frac{1}{\left(\frac{J}{2} \right) \left(\frac{J}{2} 2 + \frac{J}{2} 4 \right)} \sum_{j < l, n_l \neq n_j} \frac{1}{4} (2)(2) =\frac{1}{6}
\end{align*}}
We can increase the asymptotic bias arbitrary large by selecting different values of $\tau_j$. But, $\widehat{\tau}_{\rm CL}$ will converge to $\tau$ if conditions in Corollary \ref{lem:equiv} are satisfied. Combined together, $\widehat{\tau}_{\rm CL}$ is sensitive to the design as well as the underlying (unobserved) heterogeneity of CACE across clusters whereas $\widehat{\tau}_{\rm TSLS}$ is asymptotically insensitive to the cluster design.

Proposition \ref{prop:asym_J} also has broader implications. First, the result in Proposition \ref{prop:asym_J} underscores the role of asymptotic assumptions for identifying the target causal parameter $\tau$ of interest, especially for $\widehat{\tau}_{\rm TSLS}$. Second, because there may be CRTs for which $\widehat{\tau}_{\rm CL}$ will never consistently estimate the CACE, this suggest that doing asymptotic inference using clustered-based method, say computing p-values or confidence intervals in Section \ref{sec:clus}, may be invalid (i.e. inflated Type I errors or incorrect coverage).

Next, Proposition \ref{prop:asymp_nj} examines the properties of $\widehat{\tau}_{\rm CL}$ and $\widehat{\tau}_{\rm TSLS}$ in the asymptotic regime of the second type where the number of clusters $J$ remains bounded and the number of units per cluster $n_j$ goes to infinity. Unlike the regime in Proposition \ref{prop:asym_J}, the limiting variables for $\widehat{\tau}_{\rm CL}$ and $\widehat{\tau}_{\rm TSLS}$ when $n_j \to \infty$ will not be constant, but random variables.  As such, we study the limiting behavior of the expectations of the first-order Taylor approximations of $\widehat{\tau}_{\rm CL}$ and $\widehat{\tau}_{\rm TSLS}$, which provides an approximation about the moment of the limiting values of $\widehat{\tau}_{\rm CL}$ and $\widehat{\tau}_{\rm TSLS}$.
\begin{proposition} \label{prop:asymp_nj} Suppose Assumptions \ref{a1}-\ref{a4} hold. As $n$ grows, consider a sequence of $\mathcal{F}$ where (i) $J$ remains bounded, (ii) for every $j$, $n_j, n_{{\rm CO},j} \to \infty$ where $n_{{\rm CO},j} / n_j \to \tilde{p}_{{\rm CO},j} \in (0,1)$ and $n_j / n \to \tilde{p}_{j} \in (0,1)$, and (iii) $\tau_{j} \to \tilde{\tau}_{j}$. Then, the first-order expectations of the cluster-based and unit-based estimator, $\tau_{\rm CL}$ and $\tau_{\rm TSLS}$, converge to the following values 
\[
\lim_{n \to \infty} \tau_{\rm CL} - \tau = \sum_{j=1}^{J} \tilde{\tau}_{j}  \tilde{p}_{{\rm CO},j} \left( \frac{ \tilde{p}_{j} }{\sum_{l=1}^{J} \tilde{p}_{{\rm CO},l} \tilde{p}_{l}}-  \frac{ 1}{\sum_{l=1}^{J} \tilde{p}_{{\rm CO},l}}\right) 
\]
and 
\[
\lim_{n \to \infty} \tau_{\rm TSLS} - \tau = \sum_{j=1}^{J} \tilde{\tau}_{j}  \tilde{p}_{{\rm CO},j} \left( \frac{ \tilde{p}_{j} (1 - \tilde{p}_{j})}{\sum_{l=1}^{J} \tilde{p}_{{\rm CO},l} \tilde{p}_{l} (1 - \tilde{p}_l) }-  \frac{ 1}{\sum_{l=1}^{J} \tilde{p}_{{\rm CO},l}}\right)
\]
\end{proposition}
Proposition \ref{prop:asymp_nj} states that under the growing $n_{j}$ asymptotic regime, both $\tau_{\rm CL}$ and $\tau_{\rm TSLS}$ may not converge to $\tau$. But, both behave differently in terms of how they diverge away from $\tau$. As an illustration, if we have $J =4 $ clusters where the proportions of compliers $p_{{\rm CO},j}$ across clusters are identical and the limiting cluster-level CACEs $\tau_{j}$ are non-negative, we arrive at 
\[
\lim_{n \to \infty} \tau_{\rm CL} - \tau = \sum_{j=1}^{4} \tilde{\tau}_j \left(\tilde{p}_{j} - \frac{1}{4} \right), \quad{} \lim_{n \to \infty} \tau_{\rm TSLS} - \tau =  \sum_{j=1}^{4} \tilde{\tau}_j \left( \frac{\tilde{p}_{j}(1-\tilde{p}_{j})}{ \sum_{l=1}^{4} \tilde{p}_{l}(1-\tilde{p}_{l})}  - \frac{1}{4} \right)
\]
For the cluster-level method, large clusters whose proportion $\tilde{p}_{j}$ exceed $1/4$, which is the proportion when all four clusters have equal size, bias the estimate upwards. In contrast, small clusters whose proportion $\tilde{p}_{j}$ is under $1/4$ bias the estimate downwards; in fact, large clusters tend to contribute more to the bias than small clusters because the positive maximal deviation of $\tilde{p}_{j}$ from $1/4$ is larger than the negative maximal deviation of it. For the unit-level method, the direction of bias is similar to the cluster-level method, however the magnitude of the bias is generally smaller. For example, if $\tilde{p}_{1} = \tilde{p}_{2} = 0.1$ and $\tilde{p}_3 = 0.3$ and $\tilde{p}_4 = 0.5$, the maximal positive and negative differences of $\tilde{p}_{j} - 1/4$ in the cluster-level estimator are $-0.15$ and $0.25$, respectively. In contrast, the maximal positive and negative differences of $\tilde{p}_{j}(1-\tilde{p}_{j}) / \sum_{l=1}^{4} \tilde{p}_{l}(1-\tilde{p}_{l})  - \frac{1}{4}$ in the unit-level estimator are $-0.11$ and $0.14$, respectively. This is because of the quadratic term in the unit-level estimator $\tilde{p}_{j}(1-\tilde{p}_{j})$, which essentially shrinks the bias towards a common mean. Note that following Corollary \ref{lem:equiv}, if the cluster sizes are asymptotically identical so that $\tilde{p}_{j} = \tilde{p}_{l}$ for all $j\neq l$, the two methods will converge to the limiting values of $\tau$. 

In summary, in asymptotic settings where the number of clusters is growing and the number of units within each cluster are bounded, the finite-sample identification problems of $\widehat{\tau}_{\rm TSLS}$ go away. However, $\widehat{\tau}_{\rm CL}$ may still fail to converge to the true CACE $\tau$. In contrast, in the asymptotic setting where the number of units within each cluster grow and the number of clusters remain fixed, the finite sample problems discussed in Section \ref{sec:fixed_pop} remain, and the two methods may not converge to the true CACE.

We conclude the section by making a few technical points. First, condition (iii) in Proposition \ref{prop:asym_J} can be relaxed so that the average cluster-level potential outcome is growing, but at a rate slower than $J$. We do not believe this scenario is realistic in practice for the bounded cluster-size asymptotic regime and hence, we state a more simple boundedness condition. Second, in the proof of Proposition \ref{prop:asym_J}, we show the rate of convergence of $\widehat{\tau}_{\rm TSLS}$ is $\sqrt{J}$, which is the usual rate for this asymptotic regime. Finally, for clarity of exposition, we avoid a more technically accurate exposition where we place subscript $n$ in $\mathcal{F}$, $\tau_{\rm CL}$, $\tau_{\rm TSLS}$, and $\tau$.

\subsection{Inference}

Next, we review the inferential properties of  $\widehat{\tau}_{\rm CL}$ and $\widehat{\tau}_{\rm TSLS}$, which have been more thoroughly discussed in the literature than our discussion on identification and point estimation above \citep{schochet2013estimators,wooldridge2010econometric}; often, the literature simply assumes that the proposed estimators estimate the quantity of interest $\tau$ and proceed with inference. Suppose that $\widehat{\tau}_{cl}$ and $\widehat{\tau}_{tsls}$ identify the CACE $\tau$ under additional assumptions in Corollary \ref{lem:equiv}. Both rely on asymptotic approximations in the asymptotic regime specified in Proposition \ref{prop:asym_J} for inference where the number of clusters go to infinity, the number of units within each cluster remain bounded, and the treatment assignment has a strong effect on compliance; the latter condition is equivalent to having experiments with uniformly high compliance rates. If the experiment has low compliance rates, then the asymptotic arguments used to construct these confidence intervals are no longer accurate and will often be shorter and/or off from the true target value; see \citet{stock2002survey} and references therein for examples. In general, we would argue that the conditions of the smartcard intervention do not closely hew to any of the asymptotic templates. The number of clusters is not particularly large, the number of units vary from cluster to cluster, and the compliance rates vary from cluster to cluster.

\section{An Almost Exact Approach}

Thus far we have outlined two particular weaknesses in extant methods. First, these methods do not always estimate CACE, especially in finite samples, and for one approach, asymptotics assumptions have to be used if identification is desired. This is a particular concern, since some CRTs have small sample sizes. Second, both methods rely on inferential techniques that require the compliance rate to be high. Next, we propose solutions that estimates CACE even if cluster sizes or cluster-specific CACE differ and can produce confidence intervals that remain valid even if the compliance rate is low. Our solutions are adapted from the almost exact methods of \citet{Keele:2017fiv} for non-clustered experimental designs, which uses closed-form expressions for interval estimation and is a specific instance of ``finite sample asymptotics'' laid out in \citet{Hajek:1960} and \citet{Lehmann:2004}. 

Formally, for any $\tau_0 \in \reals$ and for each cluster $j$, let $A_j(\tau_0) = Y_j - D_j \tau_0$ be the adjusted response for each cluster, $A_j^{(1)}(\tau_0) = Y_{j}^{\left(1,D_j^{(1)}\right)} - D_{j}^{(1)} \tau_0$ be the adjusted potential response if cluster $j$ was treated, and $A_j^{(0)}(\tau_0) = Y_{j}^{\left(0,D_j^{(0)}\right)} - D_{j}^{(0)} \tau_0$ be the adjusted potential response if cluster $j$ was not treated. Let $A_T(\tau_0) = \frac{1}{m} \sum_{j=1}^{J} Z_j (Y_j - D_j \tau_0)$ be the mean of the adjusted response among the treated and let $A_C(\tau_0) = \frac{1}{J - m} \sum_{j=1}^{J} (1 - Z_j) (Y_j - D_j \tau_0)$ be the mean of the adjusted response among the control. Then, we propose the test statistic $T(\tau_0)$ for the hypothesis in equation \eqref{eq:hyp}
\begin{align*}
T(\tau_0) &= \frac{1}{m} \sum_{j=1}^{J} Z_{j} (Y_j - D_{j} \tau_0) - \frac{1}{J- m} \sum_{j=1}^J (1 - Z_j) (Y_j - D_j \tau_0) \\
&= \frac{1}{m} \sum_{j=1}^{J} Z_j A_j (\tau_0) - \frac{1}{J - m} \sum_{j=1}^{J} (1 - Z_j) A_j(\tau_0) \\
&= A_{T}(\tau_0) - A_{C}(\tau_0)
\end{align*}
Proposition \ref{prop:ours} shows that the test statistic $T(\tau_0)$ under appropriate scaling is asymptotically Normal.
\begin{proposition} \label{prop:ours} Suppose Assumptions (A1)-(A4) and the null hypothesis $H_0: \tau  = \tau_0$ holds. Then, as $J, J - m \to \infty$ where $m / J \to p \in (0,1)$, suppose (i) $\mu_Y$ and $\mu_D$ are constants so that $\mu_Y/\mu_D = \tau_0$ and (ii) the following growth condition holds
\[
\frac{\max_{j} \left( \frac{A_j^{(1)}}{m} + \frac{A_j^{(0)}}{J - m} -  \left( \frac{1}{J} \sum_{j=1}^{J}  \frac{A_j^{(1)}}{m} + \frac{A_j^{(0)}}{J - m} \right) \right)^2}{\sum_{j=1}^{J}  \left( \frac{A_j^{(1)}}{m} + \frac{A_j^{(0)}}{J - m} -  \left( \frac{1}{J} \sum_{j=1}^{J}  \frac{A_j^{(1)}}{m} + \frac{A_j^{(0)}}{J - m} \right) \right)^2} \to 0
\]
Then, we have
\[
\frac{T(\tau_0)}{\sqrt{Var(T(\tau_0) \mid \mathcal{F},\mathcal{Z})}} \to N(0,1)
\]
\end{proposition}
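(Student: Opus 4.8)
The plan is to reduce $T(\tau_0)$ to the sum of a simple random sample drawn without replacement from a fixed finite population of cluster-level quantities, and then to invoke a finite-population (H\'ajek-type) central limit theorem whose Lindeberg condition is exactly hypothesis (ii). The only source of randomness is the cluster assignment vector $\mathbf{Z}$, which under Assumption \ref{a1} is uniform over $\mathcal{Z}$; everything in $\mathcal{F}$ is held fixed, so all expectations and variances below are taken over the draw of $\mathbf{Z}$.

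First I would rewrite $T(\tau_0)$ in terms of the fixed potential adjusted responses. Using the consistency relations for $Y_j$ and $D_j$ together with $Z_j^2 = Z_j$ and $Z_j(1-Z_j)=0$, the observed adjusted response decomposes as $A_j(\tau_0) = Z_j A_j^{(1)}(\tau_0) + (1-Z_j)A_j^{(0)}(\tau_0)$. Substituting and collecting the $Z_j$ terms gives
\begin{equation*}
T(\tau_0) = \sum_{j=1}^{J} Z_j\, c_j - \frac{1}{J-m}\sum_{j=1}^{J} A_j^{(0)}(\tau_0), \qquad c_j = \frac{A_j^{(1)}(\tau_0)}{m} + \frac{A_j^{(0)}(\tau_0)}{J-m}.
\end{equation*}
The second term is a fixed constant, so the entire stochastic content of $T(\tau_0)$ is the sample sum $\sum_{j} Z_j c_j$, which is precisely the sum of a simple random sample of size $m$ drawn without replacement from the finite population $\{c_1,\ldots,c_J\}$.

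Next I would establish that the statistic is centered at zero under the null, which is needed because the proposition standardizes $T(\tau_0)$ without subtracting its mean. Taking expectations with $\E[Z_j \mid \mathcal{Z}] = m/J$ and simplifying the coefficient multiplying $\sum_{j} A_j^{(0)}(\tau_0)$ (which collapses to $-1/J$) yields $\E[T(\tau_0)\mid \mathcal{F},\mathcal{Z}] = \frac{1}{J}\sum_{j}\left(A_j^{(1)}(\tau_0) - A_j^{(0)}(\tau_0)\right)$. Since $\sum_{j}\left(A_j^{(1)} - A_j^{(0)}\right) = n\mu_Y - \tau_0\, n\mu_D = n\mu_D\left(\mu_Y/\mu_D - \tau_0\right)$, condition (i) forces this to vanish, so $\E[T(\tau_0)\mid\mathcal{F},\mathcal{Z}] = 0$ and the deterministic shift does not affect $\Var(T(\tau_0)\mid\mathcal{F},\mathcal{Z})$.

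Finally I would invoke the combinatorial central limit theorem for simple random sampling without replacement \citep{Hajek:1960}. For the sample sum $\sum_{j} Z_j c_j$ with $m, J-m \to \infty$, this theorem delivers asymptotic normality of the standardized sum precisely when
\begin{equation*}
\frac{\max_j (c_j - \bar c)^2}{\sum_{j=1}^J (c_j - \bar c)^2} \to 0, \qquad \bar c = \frac{1}{J}\sum_{j=1}^J c_j,
\end{equation*}
which, after substituting $c_j = A_j^{(1)}/m + A_j^{(0)}/(J-m)$, is verbatim the growth condition (ii). Because the deterministic shift leaves the variance unchanged and the centering equals $\E[T(\tau_0)\mid\mathcal{F},\mathcal{Z}]=0$, the standardized sample sum coincides with $T(\tau_0)/\sqrt{\Var(T(\tau_0)\mid\mathcal{F},\mathcal{Z})}$, and the conclusion follows. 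The main obstacle is the correct application of the finite-population CLT: one must confirm that condition (ii) is exactly the H\'ajek--Lindeberg condition for the triangular array $\{c_j\}$, whose entries themselves depend on $m$ and $J$, and that $\sum_{j}(c_j - \bar c)^2 > 0$ so the standardization is well defined. Checking that the proportional-assignment regime $m/J \to p \in (0,1)$ supplies the needed $m, J-m \to \infty$ growth, and that this condition controls the remainder term in the combinatorial CLT, is where the real care is required.
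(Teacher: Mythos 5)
Your proposal is correct and follows essentially the same route as the paper's proof: both rewrite $T(\tau_0)$ as $\sum_j Z_j v_j(\tau_0)$ plus a constant, where $v_j(\tau_0) = A_j^{(1)}(\tau_0)/m + A_j^{(0)}(\tau_0)/(J-m)$, and then apply the finite-population CLT for simple random sampling without replacement (the paper cites Theorem 2.8.2 of Lehmann, you cite H\'ajek), with condition (ii) serving as the Lindeberg-type condition. Your explicit verification that condition (i) forces $\E[T(\tau_0)\mid\mathcal{F},\mathcal{Z}]=0$ is a welcome clarification of a step the paper leaves implicit, but it is not a different argument.
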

Unlike the growing $J$ asymptotic regime in Proposition \ref{prop:asym_J}, Proposition \ref{prop:ours} does not make the assumption that the compliance rate $\mu_D$ is far away from zero. Also, a consequence of Proposition \ref{prop:ours} is that we can construct a Hodges-Lehmann type of point estimator of $\tau$ based on the testing properties of $T(\tau_0)$  \citep{hodges_estimates_1963}. Specifically, our point estimator, which we denote as $\widehat{\tau}_{rm AE}$, is the value of $\tau$ that satisfies the equation $T(\tau) = 0$, which turns out to be
\begin{equation} \label{eq:tau_AE}
\widehat{\tau}_{\rm AE} = \frac{\frac{1}{m} \sum_{j=1}^{J} Z_j Y_j - \frac{1}{J-m} \sum_{j=1}^{J} (1 -Z_j) Y_j}{\frac{1}{m} \sum_{j=1}^{J} Z_j D_j - \frac{1}{J-m} \sum_{j=1}^{J} (1 -Z_j) D_j}
\end{equation}
The following corollary shows that the expectation of the first-order Taylor expansion of the proposed point estimator $\widehat{\tau}_{\rm AE}$ is the CACE in finite samples, irrespective of the number of units within the cluster or cluster-level CACE heterogeneity; this is in contrast to the two extant methods. Note that one can also derive the same estimator using the methods in \citet{middleton_unbiased_2015} as plug-in estimates for the terms in equation~\ref{eq:tau_AE}.

\begin{corollary} \label{coro:AE_ident} If Assumptions \ref{a1}-\ref{a4} hold, the first-order expectation of $\widehat{\tau}_{\rm AE}$ in equation \eqref{eq:tau_AE} is the complier average treatment effect, i.e.
\begin{align*}
\tau_{\rm AE} = \frac{\E[\frac{1}{m} \sum_{j=1}^{J} Z_j Y_j - \frac{1}{J-m} \sum_{j=1}^{J} (1 -Z_j) Y_j \mid \mathcal{F}, \mathcal{Z}] }{\E[ \frac{1}{m} \sum_{j=1}^{J} Z_j D_j - \frac{1}{J-m} \sum_{j=1}^{J} (1 -Z_j) D_j \mid \mathcal{F}, \mathcal{Z}]} = \tau 
\end{align*}
Also, under the asymptotic regime in Proposition \ref{prop:asym_J}, the estimator is consistent for $\tau$, i.e. $\lim_{n \to \infty} \widehat{\tau}_{\rm AE} - \tau = 0$. Also, under the asymptotic regime in Proposition \ref{prop:asymp_nj}, we have $\lim_{n \to \infty} \tau_{\rm AE} - \tau = 0$. 
\end{corollary}

Another consequence of Proposition \ref{prop:ours} is that for any $\alpha, 0 < \alpha <1$, we can construct a two-sided $1-\alpha$ confidence interval for $\tau$ by inverting the test and using the asymptotic Normal distribution under the null hypothesis. This requires specifying an estimator for $Var(T(\tau_0) \mid \mathcal{F},\mathcal{Z})$ and one such estimator is the classic sum of variance between the treated and control units \citep{Neyman:1923a,imbens_causal_2015}:
\begin{align*}
S^2(\tau_0) =& \frac{1}{m(m-1)} \sum_{j=1}^{J} Z_j (A_j (\tau_0) - A_T(\tau_0))^2 + \frac{1}{(J-m)(J-m-1)} \sum_{j=1}^{J} (1 - Z_j) (A_j(\tau_0) -  A_C(\tau_0))^2.
\end{align*}
The variance estimator $S^2(\tau_0)$ is well-known to be conservative and generally speaking, there does not exist an unbiased estimator for $Var[T(\tau_0) \mid \mathcal{F},\mathcal{Z}]$  \citep{Neyman:1923a}. Recent proposals by \citet{robins_confidence_1988} and \citet{aronow_sharp_2014} provide sharper estimates of $Var[T(\tau_0) \mid \mathcal{F},\mathcal{Z}]$, which can also be used in our context. For example, we can replace our variance estimate $S^2(\tau_0)$ by the upper bound $\widehat{V}_{N}^H$ in equation (9) of \citet{aronow_sharp_2014}. Regardless, once we have selected an estimator for variance, say $S^2(\tau_0)$, a two-sided $1-\alpha$ confidence interval of $\tau$ is the set of $\tau_0$ that is accepted under the null hypothesis.
\begin{equation} \label{eq:ci_broad}
\left\{\tau_0 : P_{H_0}\left(\left| \frac{T(\tau_0)}{S(\tau_0)}\right| \leq z_{1-\alpha/2} | \mathcal{F}, \mathcal{Z}\right)  \right\}
\end{equation}
where $z_{1-\alpha/2}$ is the $1-\alpha/2$ quantile of the standard Normal distribution. 
The interval produces valid statistical inference in that the interval in equation \eqref{eq:ci_broad} will cover $\tau$ with at least $1-\alpha$ probability. Also, the $1-\alpha$ confidence interval in equation \eqref{eq:ci_broad} can be greatly simplified as a solution to a quadratic equation. Consider the following quantities.
\begin{align*} 
\widehat{s}_{Y_T}^2 &= \frac{1}{m-1} \sum_{j=1}^{J} Z_j \left(Y_j - \frac{\sum_{j=1}^{J} Z_j Y_j}{m}\right)^2 \\
\widehat{s}_{Y_C}^2 &= \frac{1}{J - m-1} \sum_{j=1}^{J} (1 - Z_j) \left(Y_j - \frac{\sum_{j=1}^{J} (1 - Z_j) Y_j}{J-m}\right)^2\\
\widehat{s}_{D_T}^2 &= \frac{1}{m-1} \sum_{j=1}^{J} Z_j \left(D_j - \frac{\sum_{j=1}^{J} Z_j Y_j}{m}\right)^2 \\
\widehat{s}_{D_C}^2 &= \frac{1}{J - m-1} \sum_{j=1}^{J} (1 - Z_j) \left(D_j - \frac{\sum_{j=1}^{J} (1 - Z_j) D_j}{J-m}\right)^2 \\
\widehat{s}_{{YD}_T}^2 &= \frac{1}{m - 1} \sum_{j=1}^{J} Z_j \left(Y_j - \frac{\sum_{j=1}^{J} Z_j Y_j}{m} \right)\left(D_j - \frac{\sum_{j=1}^{J}Z_jD_j}{m} \right) \\
\widehat{s}_{{YD}_C}^2 &=\frac{1}{J - m - 1} \sum_{j=1}^{J} (1 - Z_j) \left(Y_j - \frac{\sum_{j=1}^{J} (1 - Z_j)Y_j}{J-m} \right)\left(D_j - \frac{\sum_{j=1}^{J} (1-Z_j)D_j}{J-m} \right)
\end{align*}
The terms $\widehat{s}_{Y_{T}}^2$ and $\widehat{s}_{D_T}^2$ are the estimated variances of $Y$ and $D$, respectively, among treated clusters. The terms $\widehat{s}_{Y_{C}}^2$ and $\widehat{s}_{D_C}^2$ are the estimated variances of $Y$ and $D$, respectively, among control clusters. The terms $\widehat{s}_{{YD}_T}^2$ and $\widehat{s}_{{YD}_C}^2$ are the estimated covariances between $Y$ and $D$ among the treated and control clusters, respectively. Using these estimated variances, the interval in equation \eqref{eq:ci_broad} is a solution to the quadratic equation 
\begin{equation} \label{eq:quadEq}
\{ \tau_0 \mid a \tau_0^2 +2 b \tau_0 + c \leq 0 \}
\end{equation}
where the coefficients $a$, $b$, and $c$ in equation \eqref{eq:quadEq} are 
\begin{align*}
a &= \widehat{\mu}_{D}^2 - z_{1-\alpha/2}^2 \left( \frac{\widehat{s}_{D_T}^2 }{m} + \frac{\widehat{s}_{D_C}^2}{J-m}  \right) \\
b &=  -\left[  \widehat{\mu}_{Y} \widehat{\mu}_{D} - z_{1-\alpha/2}^2\left( \frac{\widehat{s}_{{YD}_T}^2}{m} +\frac{\widehat{s}_{{YD}_C}^2}{J-m} \right) \right] \\
c &=  \widehat{\mu}_{Y}^2 - z_{1-\alpha/2}^2  \left(\frac{\widehat{s}_{Y_T}^2}{m}  + \frac{ \widehat{s}_{Y_C}^2}{J-m}  \right) 
\end{align*}
The $1-\alpha$ confidence interval of $\tau$ based on equation \eqref{eq:quadEq} is simple, requiring a quadratic solver based on coefficients $a$, $b$, and $c$. A notable feature of this confidence interval is that it will produce an infinite confidence intervals when an instrument is weak. An infinite confidence interval is a warning that the data contain little information \citep[p. 185]{Rosenbaum:2002} and is a  necessary property for confidence intervals in IV settings \citep{dufour_some_1997}. Also, this quadratic confidence interval in equation \eqref{eq:quadEq} can never be empty, or mathematically you cannot have $a> 0$ and $(2b)^2 - 4ac < 0$; see the supplementary materials for details. However, this may not be generally true for different variance estimators $Var(T(\tau_0) \mid \mathcal{F},\mathcal{Z})$ which is used to construct confidence intervals of the type in equation \eqref{eq:ci_broad}.

We also propose an exact method for constructing CIs under a more narrow null hypothesis than in equation \eqref{eq:hyp}. Specifically, consider the null $H_{00}$ where the cluster-specific CACE $\tau_j$ are equal to $\tau_0$ for all $j$, i.e. there is cluster-level CACE homogeneity.
\begin{equation} \label{eq:hyp_homoNull}
H_{00}: \tau_j = \tau_0, \quad{} \forall j=1,\ldots, J
\end{equation}
The null $H_{00}$ is a narrower hypothesis than $H_0$ in that the parameters $\mathcal{F}$ that satisfy $H_{00}$ also satisfies $H_0$, but the converse is not true. However, and most importantly, $H_{00}$ is not a strict sharp null in the sense that we still cannot specify all the potential outcomes, say $Y_{ji}^{(z,D_{ji}^{(z)})}$ for any $z\in \{0,1\}$ even under $H_{00}$. Note that under $H_{00}$, the population CACE $\tau$ is equal to $\tau_0$.

A benefit of deriving inferential properties under a restrictive $H_{00}$ is that we can obtain exact, non-parametric, finite-population confidence intervals instead of the asymptotic ones in Proposition \ref{prop:ours}. In particular, the $1-\alpha$ confidence interval derived under $H_{00}$ covers $\tau$ with probability at least $1-\alpha$ in any finite sample.
\begin{proposition} \label{prop:ours_2} Suppose Assumptions (A1)-(A4) hold and the null hypothesis $H_{00}$ in equation \eqref{eq:hyp_homoNull} holds. Then, for any $t \in \reals$ and for any $\tau_0$, the null distribution of $T(\tau_0)$ under $H_{00}$ is the permutation distribution.
\begin{equation} \label{eq:null_2}
P\left(T(\tau_0) \leq t \mid \mathcal{Z}, \mathcal{F} \right) = \frac{\left| \mathbf{z} \in \mathcal{Z} \mid  \frac{1}{m} \sum_{j=1}^{J} z_j A_j(\tau_0) - \frac{1}{J-m} \sum_{j=1}^{J} (1 -z_j) A_{j}(\tau_0) \leq t \right|} {{J \choose m}}
\end{equation}
\end{proposition}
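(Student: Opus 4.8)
The plan is to exploit the special structure that the null $H_{00}$ imposes on the cluster-level adjusted responses, reducing the claim to a standard randomization argument. The central observation is that under $H_{00}$ the adjusted response of each cluster takes the same value whether that cluster is treated or control, so that $A_j(\tau_0)$ becomes a fixed quantity determined by $\mathcal{F}$ alone. Once this is established, the distribution of $T(\tau_0)$ is driven entirely by the random assignment $\mathbf{Z}$, and Assumption \ref{a1} delivers the permutation distribution in equation \eqref{eq:null_2} immediately.

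First I would compute the difference between the two potential adjusted responses for a single cluster, $A_j^{(1)}(\tau_0) - A_j^{(0)}(\tau_0) = \left(Y_j^{(1,D_j^{(1)})} - Y_j^{(0,D_j^{(0)})}\right) - \left(D_j^{(1)} - D_j^{(0)}\right)\tau_0$. Expanding each term as a sum over individuals and classifying each unit as an always-taker, never-taker, or complier (Assumption \ref{a3} rules out defiers), the exclusion restriction (Assumption \ref{a4}) forces the outcome difference to vanish for always-takers and never-takers. This leaves $Y_j^{(1,D_j^{(1)})} - Y_j^{(0,D_j^{(0)})} = \sum_{i:\text{complier}} \left(Y_{ji}^{(1)} - Y_{ji}^{(0)}\right)$ and $D_j^{(1)} - D_j^{(0)} = n_{{\rm CO},j}$. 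Substituting the definition of the cluster-specific CACE $\tau_j$ then yields the clean identity $A_j^{(1)}(\tau_0) - A_j^{(0)}(\tau_0) = n_{{\rm CO},j}(\tau_j - \tau_0)$, which equals zero under $H_{00}$.

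With the adjusted responses pinned to a common value $A_j(\tau_0)$ not depending on $Z_j$, the statistic becomes $T(\tau_0) = \frac{1}{m}\sum_{j} Z_j A_j(\tau_0) - \frac{1}{J-m}\sum_{j} (1-Z_j) A_j(\tau_0)$, a fixed linear functional of the random vector $\mathbf{Z}$. Conditioning on $\mathcal{F}$ and expanding the CDF as a sum over assignments, Assumption \ref{a1} gives $\Prob(\mathbf{Z} = \mathbf{z} \mid \mathcal{F}, \mathcal{Z}) = 1/\binom{J}{m}$ for every $\mathbf{z} \in \mathcal{Z}$, so each of the $\binom{J}{m}$ assignments contributes equal mass and the CDF collapses exactly to the counting formula in equation \eqref{eq:null_2}.

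I expect the only real work to be the first step, namely the bookkeeping that collapses $A_j^{(1)}(\tau_0) - A_j^{(0)}(\tau_0)$ to $n_{{\rm CO},j}(\tau_j - \tau_0)$, since this is where Assumptions \ref{a3} and \ref{a4} and the definition of $\tau_j$ must be invoked together. The subtle conceptual point is to confirm that $H_{00}$ makes the per-cluster difference vanish \emph{cluster by cluster} rather than merely on average; this is precisely why the narrower null is needed to obtain an exact finite-sample distribution, whereas the weaker $H_0: \tau = \tau_0$ controls only the aggregate and therefore yields only the asymptotic Normal approximation of Proposition \ref{prop:ours}. The randomization step that follows is then routine.
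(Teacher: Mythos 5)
Your proposal is correct and follows essentially the same route as the paper: both proofs reduce the claim to showing $A_j^{(1)}(\tau_0) - A_j^{(0)}(\tau_0) = n_{{\rm CO},j}(\tau_j - \tau_0) = 0$ under $H_{00}$ (the paper carries out the same always-taker/never-taker/complier bookkeeping via the indicator $I(D_{ji}^{(1)}=1, D_{ji}^{(0)}=0)$), and then invoke the uniform assignment probability of Assumption \ref{a1} to obtain the permutation distribution. Your closing remark that the cancellation must hold cluster by cluster, not merely on average, correctly identifies why the narrower null $H_{00}$ is required for the exact result.
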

Using the duality between testing and confidence intervals, one can construct a $1-\alpha$ confidence interval for $\tau$ by finding values of $\tau_0$ for which $H_{00}$ is accepted at the $\alpha$ level. For example, let $q_{1-\alpha/2}$ be the $1-\alpha/2$ quantile of the null distribution in equation \eqref{eq:null_2}. Then, a two-sided $1-\alpha$ confidence interval of $\tau$ based on Proposition \ref{prop:ours_2} is the following set of $\tau_0$
\begin{equation} \label{eq:ci_exact}
\left\{\tau_0 : P\left( |T(\tau_0)| \leq q_{1-\alpha/2} \mid \mathcal{Z}, \mathcal{F} \right) \right\}
\end{equation}
where we assumed, for simplicity, that $T(\tau_0)$ is symmetric around $0$; if the latter does not hold, we can take the union of two one-sided confidence intervals to form the desired two-sided confidence interval. Unlike the $1-\alpha$ confidence interval in equation \eqref{eq:ci_exact}, the $1-\alpha$ confidence interval in equation \eqref{eq:ci_exact} is exact in finite-samples, not relying on any asymptotic arguments or estimates of the variance. Also, like the confidence interval in equation \eqref{eq:ci_broad}, if the confidence interval in equation \eqref{eq:ci_exact} is infinite, it is a warning that the data contain little information about $\tau_0$.  Also, if the confidence interval in equation \eqref{eq:ci_exact} is empty, it suggests either that the instrumental variables assumptions \ref{a2}-\ref{a4} are not met or, most likely, that the CACE homogeneity assumption $H_{00}$ is not met since technically speaking, the permutation null distribution in equation \eqref{eq:ci_exact} remains valid even if assumptions \ref{a2}-\ref{a4} fails to hold; see the supplementary materials for technical details. 

A caveat of the interval in equation \eqref{eq:ci_exact} is that one has to compute $q_{1-\alpha/2}$, which depending on the size of $J$ and $m$, may be computationally burdensome. In practice, we recommend using the confidence interval in \eqref{eq:ci_exact} if $J$ is small, roughly under $40$ depending on one's computational power. Next, we evaluate our proposed method against current practice using a simulation study.

\section{Simulation Study} 

We now conduct a simulation study to compare the performance of current methods of estimation and inference to our almost exact approach. As we outlined above, extant methods may not recover the target causal estimand when complier effects vary with cluster size. In the following simulations, we focus on how performance of the various methods change as complier effects vary with cluster sizes. First, we describe general aspects of the simulation. We designed the simulations to closely replicate the conditions of the smartcard intervention.

 Like that study, we assume noncompliance is one-sided. Next, $\pi$ is the probability a unit within a cluster is a complier. Let $P_i$ denote the compliance class, which only includes compliers and never-takers in the one-sided compliance setting, since always-takers and defiers are excluded by the design. We designate a unit as complier, $P_i = co$, by random sampling units from each cluster with probability $\pi$ and $1 - \pi$. We generated outcomes using a linear mixed model (LMM) of the form 
$$
Y_{ji} = \alpha + \tau I(P_i = co) Z_j + \beta n_i + \gamma Z_j n_i + c_i + \epsilon_{ij}.
$$ 
where $co$ indicates that a unit is a complier. Under this model, $Z_{j} = 1$ if cluster $i$ was assigned to treatment and $Z_j = 0$ if the cluster was assigned to control, so $\tau$ is a measure of the individual level treatment effect if the unit is a complier. In the model, $c_i$ is a cluster-level random effect, $\beta$ is a specific cluster-level effect that varies with the size of the cluster, and $\gamma$ allows the treatment effect to vary by cluster size. In the study, $\gamma$ is the key parameter that we vary in the simulations. In the simulations, we used a $t$-distribution with five degrees of freedom for the error terms $c_i$ and $\epsilon_{ij}$.

For this data generating process, we define the intraclass correlation, $\lambda$, as $\Var(c_i)/\{  \Var(c_i) + \Var(\epsilon_{ij})  \} $ when $c_i$ and $\epsilon_{ij}$ have finite variance. We can adjust the scale of the cluster distribution errors $c_i$, so that we can control the value of $\lambda$ in the simulations. In all the simulations, we set $\lambda$ to 0.28, which is equal to the estimated intraclass correlation in the smartcard data. This is is fairly large ICC value. \citet{hedges2007intraclass} using ICC estimates from clustered randomized experiments in education find that $\lambda$'s range from 0.07 to 0.31, with an average value of 0.17. \citet{small_randomization_2008} note that $\lambda$ values in the range of .002 to 0.03 are more typical in public health interventions that target clusters such as hospitals, clinics or villages. 

We repeated the simulation for differing numbers of clusters. We used cluster sizes of 20, 30, 50, 80, 100, and 200. In most CRTs with noncompliance, the number of units per cluster and the compliance rate tend to vary from cluster to cluster. In our simulations, we used the units per cluster and the compliance rates from the smartcard data. We did this by sampling from the actual cluster sizes and compliance rates in the data. That is, when the number of clusters is 50, we took a random sample of 50 cluster sizes and compliance rates from the smartcard data. This allowed us to vary cluster sizes and compliance rates in a fashion that mimics the typical data structure in a CRT with noncompliance.

As we noted above, the key parameter in the simulation is $\gamma$. As such, we conducted three different sets of simulations. In the first, we set $\gamma = 0$, which implies that complier effects do not vary with cluster size.  In the second and third set of simulations, we set $\gamma$ to -0.03 and 0.03. For these two simulations, the complier effects are negatively and positively correlated with cluster sizes respectively.

Table~\ref{tab:sim1} contains the results for all three simulation studies with respect to bias. 
 First, we observe that when complier effects do not vary with cluster sizes all the methods perform the same in terms of bias. Here, even with as few as 20 clusters all three methods recover the true complier average causal effect. In this setting, the bias never exceeds two percent. However, if complier average causal effects are inversely correlated with cluster sizes, the estimates based on cluster level averages underestimates the true complier causal effect by 29 to 34\%. Critically, this bias does not vary with the number of clusters. Notably, in this setting TSLS and the generalized effect ratio perform equally well. Finally, when true complier causal effects are positive correlated, the cluster level averages again under-estimate the true effect by a more modest 6\%. Again TSLS performs as well as the generalized effect ratio. In more limited simulation work, we found the performance of TSLS tends suffer with a greater spread in cluster sizes.

\begin{table}[htbp]
\begin{center}
\begin{threeparttable}
\caption{Bias in Estimators Three Different IV Methods for CRTs}
\label{tab:sim1}
\begin{tabular}{lccccc}
\toprule
Number of & Gen. Effect Ratio & Cluster-level Averages & TSLS \\ 
Clusters &  Bias & Bias & Bias\\
\midrule
\multicolumn{4}{c}{Constant Complier Effects} \\
\midrule
 20 & 1.01 & 1.00 & 1.01 \\ 
   30 & 1.02 & 1.02 & 1.01 \\ 
   50 & 0.99 & 1.00 & 0.99 \\ 
   80 & 1.01 & 1.00 & 1.00 \\ 
  100 & 1.01 & 1.00 & 1.00 \\ 
  200 & 1.00 & 1.00 & 1.00 \\ 
\midrule
\multicolumn{4}{c}{Nonconstant Complier Effects: Negative Correlation} \\
\midrule
   20 & 1.03 & 0.67 & 1.01 \\ 
   30 & 1.05 & 0.71 & 1.01 \\ 
   50 & 1.03 & 0.66 & 1.00 \\ 
   80 & 1.01 & 0.67 & 1.00 \\ 
  100 & 0.97 & 0.66 & 0.99 \\ 
  200 & 0.99 & 0.66 & 1.00 \\  
\midrule
\multicolumn{4}{c}{Nonconstant Complier Effects: Positive Correlation} \\
\midrule
  20 & 0.99 & 0.94 & 1.00 \\ 
   30 & 0.99 & 0.93 & 0.99 \\ 
   50 & 0.99 & 0.93 & 0.99 \\ 
   80 & 1.00 & 0.94 & 0.99 \\ 
  100 & 1.01 & 0.94 & 1.00 \\ 
  200 & 1.00 & 0.94 & 1.00 \\    
\bottomrule
\end{tabular}
\begin{tablenotes}[para]
\small{Note: Cell entries are ratio of average estimate to true effect size.}
\end{tablenotes}
\end{threeparttable}
\end{center}
\end{table}

Next, we focus on inferential properties. In Table~\ref{tab:sim2} we record coverage rates for 95\% confidence intervals. When complier effect are constant, the almost exact method slightly undercovers when sample sizes are 20 or 30 clusters, but nominal coverage is at 95\% for larger sample sizes. Curiously the confidence intervals from the cluster-level averages tend be too narrow, while those based on TSLS were very close to the nominal coverage rate for all sample sizes. When complier effects are negatively correlated with cluster size, confidence intervals from cluster-level averages may undercover by a substantial margin. Here, TSLS overcovers. When complier effects are positively correlated with cluster size, both the almost exact method and TSLS cover at the nominal rate. For cluster-level averages, we find the confidence intervals do not always cover.

\begin{table}[htbp]
\begin{center}
\begin{threeparttable}
\caption{Confidence Interval Coverage for Three Different CRT IV Methods}
\label{tab:sim2}
\begin{tabular}{lccccc}
\toprule
Number of Clusters & Gen. Effect Ratio & Cluster-level Averages & TSLS \\ 
\midrule
\multicolumn{4}{c}{Constant Complier Effects}  \\
\midrule
 20 & 0.94 & 0.97 & 0.96 \\ 
   30 & 0.93 & 0.97 & 0.95 \\ 
   50 & 0.95 & 0.96 & 0.95 \\ 
   80 & 0.96 & 0.97 & 0.96 \\ 
  100 & 0.95 & 0.96 & 0.95 \\ 
  200 & 0.95 & 0.95 & 0.95 \\ 
\midrule
\multicolumn{4}{c}{Non-constant Complier Effects: Negative Correlation} \\
\midrule
   20 & 0.95 & 0.95 & 0.98 \\ 
   30 & 0.96 & 0.93 & 0.98 \\ 
   50 & 0.96 & 0.86 & 0.99 \\ 
   80 & 0.97 & 0.78 & 0.97 \\ 
  100 & 0.96 & 0.69 & 0.99 \\ 
  200 & 0.96 & 0.42 & 0.98 \\ 
\midrule
\multicolumn{4}{c}{Non-constant Complier Effects: Positive Correlation} \\
\midrule
   20 & 0.93 & 0.95 & 0.93 \\ 
   30 & 0.95 & 0.95 & 0.95 \\ 
   50 & 0.96 & 0.92 & 0.97 \\ 
   80 & 0.94 & 0.88 & 0.95 \\ 
  100 & 0.95 & 0.88 & 0.96 \\ 
  200 & 0.95 & 0.72 & 0.96 \\  
\bottomrule
\end{tabular}
\begin{tablenotes}[para]
\small{Note: Cell entries are ratio of average estimate to true effect size.}
\end{tablenotes}
\end{threeparttable}
\end{center}
\end{table}

Table~\ref{tab:sim3} contains the average length of the 95\% confidence intervals for all three methods. The almost exact method also produces a substantially longer confidence intervals than the other two methods. For example when there are 30 clusters the almost exact confidence intervals are nearly twice as long as those based on TSLS. Even with 200 clusters, the almost exact confidence interval tends to be about 40\% longer than those based on asymptotic approximations. Moreover, we found that when there were only 20 clusters, the almost exact method produces an infinite confidence interval approximately one percent of the time. Thus in some instances, the almost exact method provides a clear warning that the instrument is weak.

\begin{table}[htbp]
\begin{center}
\begin{threeparttable}
\caption{Confidence Interval Length of Three Different CRT IV Methods}
\label{tab:sim3}
\begin{tabular}{lccccc}
\toprule
Number of Clusters & Gen. Effect Ratio & Cluster-level Averages & TSLS \\ 
\midrule
\multicolumn{4}{c}{Constant Complier Effects}  \\
\midrule
 20 & 9.23 & 1.10 & 0.96 \\ 
   30 & 1.59 & 0.84 & 0.76 \\ 
   50 & 1.04 & 0.62 & 0.57 \\ 
   80 & 0.78 & 0.49 & 0.45 \\ 
  100 & 0.68 & 0.43 & 0.40 \\ 
  200 & 0.47 & 0.30 & 0.28 \\
\midrule
\multicolumn{4}{c}{Nonconstant Complier Effects: Negative Correlation} \\
\midrule
   20 & 3.30 & 1.03 & 0.95 \\ 
   30 & 1.40 & 0.82 & 0.76 \\ 
   50 & 0.95 & 0.63 & 0.58 \\ 
   80 & 0.69 & 0.48 & 0.45 \\ 
  100 & 0.60 & 0.43 & 0.40 \\ 
  200 & 0.42 & 0.30 & 0.29 \\ 
\midrule
\multicolumn{4}{c}{Nonconstant Complier Effects: Positive Correlation} \\
\midrule
   20 & 4.55 & 1.62 & 1.29 \\ 
   30 & 1.98 & 1.27 & 1.07 \\ 
   50 & 1.38 & 0.95 & 0.85 \\ 
   80 & 1.01 & 0.72 & 0.66 \\ 
  100 & 0.88 & 0.64 & 0.59 \\ 
  200 & 0.62 & 0.45 & 0.42 \\ 
\bottomrule
\end{tabular}
\begin{tablenotes}[para]
\small{Note: Cell entries are ratio of average estimate to true effect size.}
\end{tablenotes}
\end{threeparttable}
\end{center}
\end{table}

In sum, the simulations reveal two important results. First, the estimator based on cluster-level averages uniformly performed the worst once complier effects were correlated with cluster size. Second, contrary to our analytic results, TSLS generally performed well. That implies that it is more robust than using cluster-level averages. However, in a simulation it is difficult to mimic what may be the true relationship between cluster size and complier effects.

\section{Analysis of The Smartcard Intervention} \label{sec:data}

Finally, we turn to an analysis of the data from the smartcard intervention. As we noted above, the intervention was designed to reduce inefficiency in the payment of welfare benefits. In the original study, data were collected on two different aspects of the payment experience.  First, they measured the time, in days, between when the work was completed and the payment was collected. Second, they measured the time in minutes for recipients to collect payments on the day they were disbursed. These data were collected using surveys in each village after the smartcard intervention was in effect. The original study only focused on intention to treat estimates. Here, we report IV estimates using both the two extant methods, and the almost exact method. 

First, we review some basic descriptive statistics. As we noted above, a total of 157 villages participated in the study. Of these, 112 were randomly assigned to treatment. Across the 157 villages, 6,891 people were surveyed, such that cluster sizes varied from six to 85. The key issue we highlighted above is whether complier effects vary with cluster size. Given that the intervention was implemented across a large geographic area and that village sizes varied significantly, there is little reason to believe effects are constant across clusters.

\begin{table}[htbp]
\centering
\caption{Analysis of Smartcard Payment System - Complier Average Causal Effects}
\label{tab.out}
\begin{tabular}{lcccc}
\toprule 
&Gen. Effect Ratio & Cluster-level Averages & TSLS \\ 
\midrule
Payment Delay \\
\midrule
Point Estimate  & -2.72 & -9.14 & -10.15  \\ 
95\% Confidence Interval  & [-21.87,13.56]& [-19.16,0.88]& [-21.17,0.87] \\ 
\midrule
Time to Collect Payment\\
\midrule
Point Estimate  & -27.03 & -57.98 & -60.06  \\ 
95\% Confidence Interval  & [-82.26,26.36]& [-98.61,-17.35]& [-100.05,-20.06] \\ 
\bottomrule
\end{tabular}
\end{table}

Table~\ref{tab.out} contains the results from the three methods. For both outcomes, cluster-level averages and TSLS produce very similar results for both the point estimates and the confidence intervals. However, the estimates based on the generalized effect ratio are much smaller in magnitude and the confidence intervals are much wider. Thus there are striking similarities and contrasts between the empirical results and the simulations. First, as was the case in the simulations, confidence intervals based on the almost exact method are notably longer. However, in the simulations, the generalized effect ratio and TSLS produced very similar point estimates. Here, we find that the generalized effect ratio differs from the other two methods which are very similar. In fact, in the data, we find the pattern suggested by the analytic results: the extant methods substantially overestimate the CACE. What might explain the difference between the simulations and the empirical results?  Most likely, the pattern of complier effects varies in a more complex fashion than was true in the simulations. Thus the empirics are more closely aligned with the analytic results which suggest that TSLS does not recover the causal estimand of interest in settings of this type.

\begin{figure}
\centering
\begin{subfigure}[b]{.47\textwidth}
  \includegraphics[width=\textwidth]{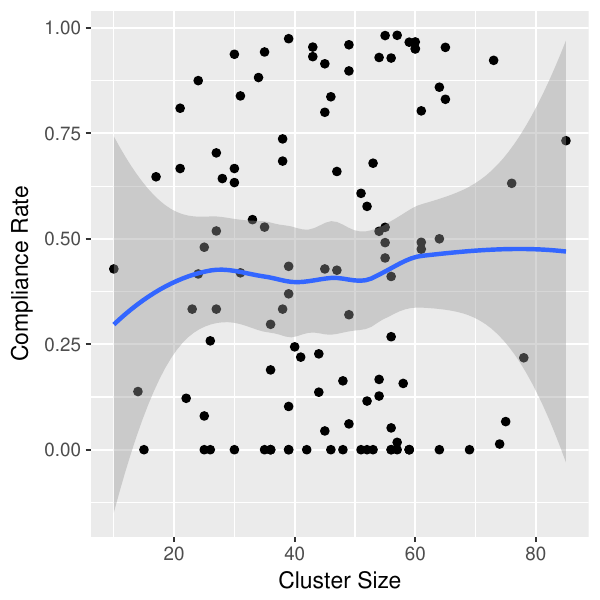}
  \label{fig:c1}
\end{subfigure}
\begin{subfigure}[b]{.47\textwidth}
  \includegraphics[width=\textwidth]{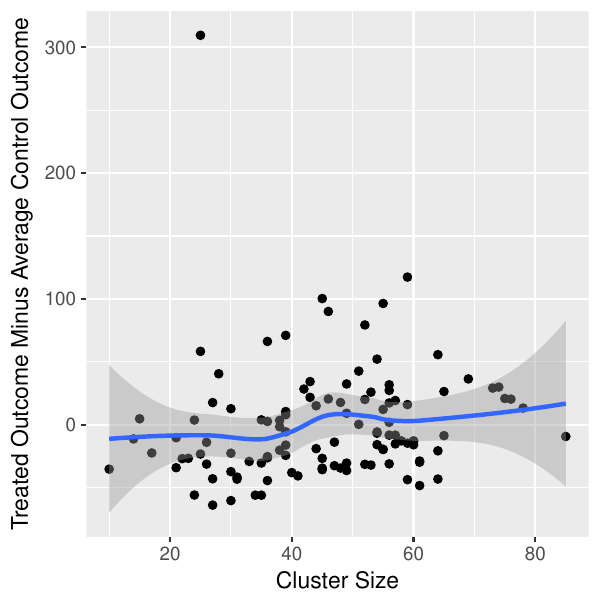}
  \label{fig:out}
\end{subfigure}

\caption{Scatterplots between cluster size, compliance rates and treated cluster outcome deviated from average control outcome for time to collect outcome.}
\label{fig:plots}
\end{figure}

To explore this possibility, we plotted the compliance rate against the cluster size for each of the 157 villages included in the study. This plot is contained in the first panel of Figure~\ref{fig:plots}. There appears to be little relationship between cluster size and compliance. Next, we calculated the average outcome for each treated cluster and subtracted the average control outcome from each of the treated averages. We plotted these adjusted treated outcomes against cluster size. This plot is in the second panel of Figure~\ref{fig:plots}. We find a weakly positive relationship between outcomes and cluster size.

\section{Discussion}

Assessment of policy interventions is often done using cluster randomized trials. This study design has two key advantages. First, it relies on randomization to remove both hidden and overt bias in the estimation of treatment effect. Second, it allows for natural patterns of interaction within clusters to reduce the likelihood of spillovers from treatment to control. The use of smartcards to deliver welfare payments in India follows such a template.

We demonstrated through analytic results, simulations, and empirics that extant methods may not recover the target causal estimand when investigators are interested in the causal effect among those that actually complied with the treatment. These methods impose the strong assumption that complier effects do not vary with cluster size or that the cluster size is identical. Additionally, both methods rely on asymptotic assumptions for interval estimation where the compliance rate is assumed to be uniformly high. Here, we have developed methods that allow for consistent estimates of the CACE when complier effects or cluster size vary. We introduce an approach to inference that provides correct confidence intervals when instruments are weak. Our method better approximates finite samples and has a closed-form solution for point estimation and inference.

\clearpage
\singlespacing
\bibliographystyle{asa}
\bibliography{keele_revised2,ref}

\end{document}